\useunder{\uline}{\ul}{}
\newtheorem{theorem}{Theorem}
\newtheorem{lemma}{Lemma}
\newcommand{\upcite}[1]{\textsuperscript{\textsuperscript{\cite{#1}}}}
  \providecommand\BibTeX{{%
    \normalfont B\kern-0.5em{\scshape i\kern-0.25em b}\kern-0.8em\TeX}}}
\begin{document}

\title{Hyperbolic Multiplex Network Embedding with Maps of Random Walk}

\author{Peiyuan Sun}
\email{sunpy@act.buaa.edu.cn}
\affiliation{%
  \institution{Beijing Advanced Institution on Big Data and Brain Computing, Beihang University}
  \city{Beijing}
  \state{China}
  \postcode{43017-6221}
}

\author{Jianxin Li}
\email{lijx@act.buaa.edu.cn}
\affiliation{%
  \institution{Beijing Advanced Institution on Big Data and Brain Computing, Beihang University}
  \city{Beijing}
  \state{China}
  \postcode{43017-6221}
}

\renewcommand{\shortauthors}{Peiyuan Sun, et al.}

\begin{abstract}
  Recent research on network embedding in hyperbolic space have proven successful in several applications. However, nodes in real world networks tend to interact through several distinct channels. Simple aggregation or ignorance of this multiplexity will lead to misleading results. On the other hand, there exists redundant information between different interaction patterns between nodes. Recent research reveals the analogy between the community structure and the hyperbolic coordinate. To learn each node's effective embedding representation while reducing the redundancy of multiplex network, we then propose a unified framework combing multiplex network hyperbolic embedding and multiplex community detection. The intuitive rationale is that high order node embedding approach is expected to alleviate the observed network's sparse and noisy structure which will benefit the community detection task. On the contrary, the improved community structure will also guide the node embedding task. To incorporate the common features between channels while preserving unique features, a random walk approach which traversing in latent multiplex hyperbolic space is proposed to detect the community across channels and bridge the connection between node embedding and community detection. The proposed framework is evaluated on several network tasks using different real world dataset. The results demonstrates that our framework is effective and efficiency compared with state-of-the-art approaches.
\end{abstract}



\keywords{multiplex network, network embedding, modular flow, multiplex community}


\maketitle


\section{I\lowercase{ntroduction}}
Network embedding learns for each node a coordinate in latent space. It has proven in several important network tasks such as node classification, community detection, etc. Most current network embedding approaches\cite{tang2015line}\cite{perozzi2014deepwalk} assume a single type relationship between nodes. However, in real world, nodes tend to interact with each other in distinct ways. Simple aggregation or ignorance of this multiplexity will lead to misleading results\cite{boccaletti2014structure}\cite{kivela2014multilayer}.

Recently, several network embedding approaches for multiplex networks have been proposed\cite{zhang2018scalable}\cite{matsuno2018mell}\cite{cen2019representation}\cite{shi2018mvn2vec}\cite{qu2017attention}\cite{liu2017principled}. Most of them aim to incorporate the common features while preserving distinct characters of each layer.

In this work we follow this idea by employing a novel multiplex community approach through which the common features between different layers are shared. Several community analysis on the multiplex network have been proposed. The key assumption of these work is that nodes share the same block structure over the multiple layers, but the class connection probabilities may vary across layers\cite{taylor2016enhanced}\cite{han2015consistent}. \cite{stanley2016clustering} extends this assumption by that the multiplex network has a group structure of layers. All layers in the same group share the same community assignments. Another lines of work employ the information tools to quantify the similarities between layers\cite{de2015structural}\cite{de2016spectral}. Then hierarchical cluster analysis can be performed by the layers distance metric. \cite{kuncheva2015community} defined a supra-adjacency matrix in which the transition probabilities are locally adapted. We extend this idea in a further step by means of modular flow identification. The intuitive idea is that if we guide a random walker to traverse in multiple latent hyperbolic spaces corresponding to each layer. Then the multiplex modular means the random walker stayed for a relatively long time. We argue that multiplex modular reveals the local similarity structure between layers. This is the most significant difference compared with all previous methods.

Most current embedding methods compute the latent node coordinates in Euclidean space. However, according to recent researches\cite{papadopoulos2012popularity}\cite{krioukov2010hyperbolic}, hyperbolic space is considered as a more reasonable latent space than Euclidean space. As scale-free and high clustering coefficient, the two most fundamental and ubiquitous properties, emerge naturally in hyperbolic space. Several hyperbolic embedding approaches have also been proposed\cite{nickel2017poincare}\cite{papadopoulos2015network}\cite{papadopoulos2015networkcommon}. However, none of these methods consider the multiplexity. The second rationale behind the hyperbolic space comes from recent findings on the analogy between hyperbolic embedding and community structure\cite{faqeeh2018characterizing}\cite{wang2017community}. We then are inspired by these findings and extend them to multiplex community scenario. This bridges the gap between single layer's unique structure and similarities between layers. More specifically, we force the node's hyperbolic coordinates within one multiplex community to be close to each other while preserving each layer's own structure. One maybe tempted to perform hyperbolic embedding after the multiplex community detection. However, as several previous work proved\cite{cavallari2017learning}\cite{wang2017community}\cite{lai2017prune}, these two tasks benefit each other reciprocally. We then propose a unified framework for multiplex community detection and hyperbolic embedding learning for each node.
In the start step, preliminary hyperbolic embeddings are obtained by preserving one and two order proximity. The modular flow approach is then employed to perform the multiplex community detection. However, compared with its origin idea, we modified the random walker to traverse in the latent hyperbolic space with a teleportation jump to neighbor node in any layer. A community coherence regularizer is then used to confine the coordinates within one community. This coherence is inspired by the Kuramoto model. Since the hyperbolic space in this work employs the polar coordinates. The angular coordinates are forced to be close further. These three steps are iterated until a local minima result is obtained.

We summarize our contribution as follows:
\begin{itemize}
    \item We propose a unified approach to simultaneously perform the network embedding and multiplex community detection task.
    \item We perform the convergence analysis on the proposed approach.
    \item We evaluate the approach on multiple real-world dataset. The result demonstrates that our approach excels in several network analysis tasks with regard to state-of-the-art baselines.
\end{itemize}


\section{R\lowercase{andom} H\lowercase{yperbolic} D\lowercase{isk} M\lowercase{odel}}
A random hyperbolic disk model\cite{gugelmann2012random} constructs a network $(V, E)$ with parameters: $\alpha, C, n$. For a scale free network whose node degree distribution follows a power law with exponent $\beta=2\alpha+1 (\alpha > \frac{1}{2}\text{ for most real networks})$. The node set of network is $V=\{1, 2,\cdots, n\}$. The disk diameter is set to $R=2{\rm logn} + C$. Each node is with two coordinates $(r_i, \theta_i)$, where the density function of $r$:
$$p(r)=\alpha\frac{{\rm sinh}(\alpha r)}{{\rm cosh}(\alpha R)-1}$$.
And angular coordinate $\theta$ draws uniformly in $[0, 2\pi]$.
The edge between node pair $u, v$ is draw from the probability function:
$$p((u, v)\in E)=\left(1+{\rm exp}\left(\frac{(d(u, v)-R)}{2T}\right)\right)^{-1}$$,
where the hyperbolic distance function
$$d(u, v)={\rm arcosh}({\rm cosh}(r_u){\rm cosh}(r_v)-{\rm sinh}(r_u){\rm sinh}(r_v){\rm cos}(\theta_u-\theta_v))$$. 


\section{C\lowercase{ommunity} C\lowercase{onductance}}

We first clarify the notations used in this work. Then two theorems based on the random hyperbolic graph model are proposed to guide the node's latent hyperbolic coordinates layout. We denote the graph as $G=(V, E)$, where $V$ and $E$ corresponds to the node set and edge set respectively.

\textbf{Volume, cut and conductance}. Recall that for $S\subseteq V$, ${\rm vol}(S)=\sum\limits_{v\in S}{\rm deg}(v)$. The cut induced by $S$ is denoted as $\partial S=\{uv, u\in S, v\in\bar{S}\}$ where $\bar{S}$ denotes the complementary set of $S$. The conductance of $S$ is defined as:

\begin{equation}
h(S)=\frac{|\partial S|}{min\left\{{\rm vol}(S), {\rm vol}(\bar{S})\right\}}
\end{equation}

For later proof we also propose the notation of relative conductance of two set $A$ and $B$ as:
\begin{equation}
R(A, B)=\frac{|E(A, B)|}{min\left\{{\rm vol}(A), {\rm vol}(B)\right\}}
\end{equation} which measures the closeness of two communities node set within.

\begin{theorem}\label{theorem1}
For three communities $A$, $B$ and $C$ generated by random hyperbolic graph model, if the communities are arranged as the sequential order, then the following inequality holds:
\begin{equation}\label{eq1intheorem1}
R(A,B)>R(A,C)
\end{equation}
\end{theorem}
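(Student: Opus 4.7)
The plan is to exploit the monotonicity of the edge probability in the random hyperbolic disk model with respect to angular separation. In the model, the edge probability for a pair $(u,v)$ depends on $d(u,v)$, which in turn depends on $\cos(\theta_u-\theta_v)$; for fixed $r_u, r_v$ the distance is strictly increasing in the angular gap $|\theta_u-\theta_v|$ on $[0,\pi]$, and the edge probability $(1+\exp((d-R)/2T))^{-1}$ is strictly decreasing in $d$. I would therefore start by interpreting ``sequential order'' to mean that each community occupies a contiguous angular arc of the disk, and the three arcs appear as $A$, then $B$, then $C$ around the circle, so that for every choice of $u\in A$, $v\in B$ and $w\in C$, the separation $|\theta_u-\theta_w|$ strictly exceeds $|\theta_u-\theta_v|$ (up to the wrap-around, which we arrange to avoid by placing all three arcs within a half-circle, the regime that is meaningful for communities generated by the model).

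Next I would compare the expected number of cut edges. Conditioning on all radial coordinates $\{r_i\}_{i\in V}$, and on the angular arcs of $A, B, C$, I would pair each $w\in C$ with a corresponding $v\in B$ having the same radial coordinate distribution; by the monotonicity above, the pairwise edge probability $\Pr[(u,v)\in E]$ dominates $\Pr[(u,w)\in E]$ pointwise. Summing over $u\in A$ gives $\mathbb{E}[|E(A,B)|\mid\{r_i\}] > \mathbb{E}[|E(A,C)|\mid\{r_i\}]$, and removing the conditioning preserves the strict inequality in expectation. Concentration of Poisson-like sums of independent indicators (the edges are independent given the coordinates) then upgrades this to a high-probability statement on $|E(A,B)|$ vs.\ $|E(A,C)|$.

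For the denominators in $R(A,B)$ and $R(A,C)$, I would argue that the volumes ${\rm vol}(A)$, ${\rm vol}(B)$, ${\rm vol}(C)$ are (up to lower-order fluctuations) determined by the radial densities $p(r)$ and the angular widths of the arcs, independently of the \emph{position} of each arc, because $\theta_i$ is drawn uniformly on $[0,2\pi]$. Under the natural symmetry assumption that the three communities have comparable angular widths, the three volumes are close in expectation, so $\min\{{\rm vol}(A),{\rm vol}(B)\}$ and $\min\{{\rm vol}(A),{\rm vol}(C)\}$ are within a lower-order factor of each other. Dividing the strict inequality $\mathbb{E}[|E(A,B)|] > \mathbb{E}[|E(A,C)|]$ by these comparable denominators yields $R(A,B)>R(A,C)$.

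The main obstacle I expect is the denominator step rather than the numerator step: the conductance-style normalization uses a $\min$ of random volumes, and if the three communities are permitted to differ wildly in size the inequality can be inverted. I would handle this either by adding a (mild) uniform-size hypothesis on the communities, or by noting that since $\{r_i\}$ and $\{\theta_i\}$ are drawn independently from the same distributions, ${\rm vol}(A), {\rm vol}(B), {\rm vol}(C)$ have identical marginal distributions whenever the arcs have the same angular width, so $\min\{{\rm vol}(A),{\rm vol}(B)\} \stackrel{d}{=} \min\{{\rm vol}(A),{\rm vol}(C)\}$, and the desired inequality reduces cleanly to the comparison of the numerators already established by the monotonicity argument. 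A secondary technicality is ruling out the wrap-around case when arcs are large; restricting to arcs strictly smaller than $2\pi/3$, or equivalently assuming the three communities together do not cover the disk, suffices.
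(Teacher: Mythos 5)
Your strategy is sound in outline, but it is a genuinely different route from the paper's. The paper fixes the same geometric picture (three angular sectors in circular order) and then proceeds by direct asymptotic computation: it writes $\mathbb{E}[|E(S,\bar S)|]=|S||\bar S|\cdot\mathbb{P}(u\in S, v\in\bar S, u\sim v)$, splits the connection probability according to whether $r_u+r_v<R$ or $r_u+r_v\ge R$, and evaluates the two integrals, obtaining $P_1=O(R)e^{-\alpha R}$ (angle-independent) and a second term $P_2$ whose prefactor $(\theta_R-\theta_B)^2/(2\theta_A\theta_C)$ carries the dependence on the intervening angular gap; the intended conclusion is that this angle-dependent contribution shrinks as the gap widens, so $E(A,B)>E(A,C)$. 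Your coupling argument reaches the same numerator comparison without computing any integrals: pairing each $w\in C$ with a $v\in B$ at the same radial coordinate and the same relative angular position, and invoking monotonicity of the hyperbolic distance in the angular gap, gives pointwise stochastic domination of the edge indicators. That is cleaner and more robust (no asymptotics in $R$ are needed), though your preliminary universal claim that $|\theta_u-\theta_w|>|\theta_u-\theta_v|$ for \emph{all} triples is false near the shared boundary of $B$ and $C$; only the coupled version holds, and that is what you actually use. Two caveats. First, your denominator treatment needs the extra hypothesis of equal angular widths so that $\mathrm{vol}(B)$ and $\mathrm{vol}(C)$ are identically distributed; the paper instead assumes $\mathrm{vol}(C)<\mathrm{vol}(A)<\mathrm{vol}(B)$ ``without loss of generality,'' under which the two minima are $\mathrm{vol}(A)$ and $\mathrm{vol}(C)$ respectively and the normalizations genuinely differ --- neither your sketch nor the paper's closes that case. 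Second, the concentration step for $|E(A,C)|$ can fail when the expected number of $A$--$C$ edges is $O(1)$ (small or very distant sectors), so the high-probability version needs a lower bound on community sizes. To be fair, the paper's own proof stops after computing $P_1$ and $P_2$ and never assembles the final inequality or addresses the $\min$ in the denominator, so your proposal is, if anything, the more complete of the two.
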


\begin{proof}
assume $A$, $B$ and $C$ corresponds to three sector areas with angular $\Delta\theta_A$, $\Delta\theta_B$ and $\Delta\theta_C$ respectively. Figure~\ref{SectorOrder} illustrates one possible circular order of sector $A,B$ an $C$.
Then inequality \ref{eq1intheorem1} is defined as:

\begin{figure}
\centering
\includegraphics[width=0.2\textwidth]{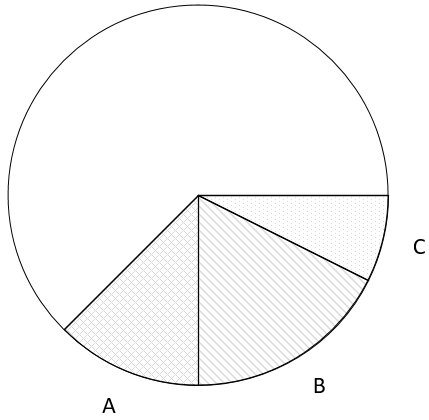}
\caption{Illustration for sector area A, B and C in Poincare Disk.}
\label{SectorOrder}
\setlength{\abovecaptionskip}{5pt}
\setlength{\belowcaptionskip}{5pt}
\end{figure}

\begin{equation}\label{eq15}
\frac{E(A,B)}{min\left\{{\rm vol}(A), {\rm vol}(B)\right\}}>\frac{E(A,C)}{min\left\{{\rm vol}(A), {\rm vol}(C)\right\}}
\end{equation}

First,since nodes distribute uniformly in the hyperbolic disk. The expected number of nodes in sector $S$ with angular $\Delta\theta$ is:

\begin{equation}\label{eq16}
\mathbb{E}(|S|)=\frac{\Delta\theta}{2\pi}N
\end{equation}

We then estimate the expected number of edges between two sector areas.
\begin{equation}\label{eq17}
\mathbb{E}(|E(S,\bar{S})|)=|S||\bar{S}|\cdot\mathbb{P}\left(u\in S, v\in \bar{S}, u\sim v\right)
\end{equation}

We divide the above probability into two cases depend on whether $r_u+r_v<R$ and assume ${\rm vol}(C)<{\rm vol}(A)<{\rm vol}(B)$ without loss of generality.

For $r_u+r_v<R$,
\begin{equation}\label{eq17}
\begin{split}
P_1 &= \iint\limits_{r_u+r_v<R}f(r_u)f(r_v)dr_udr_v\\
&=\frac{\alpha}{2C^2(\alpha,R)}R\cdot {\rm sinh}(\alpha R)+\frac{{\rm cosh}(\alpha R)}{C^2(\alpha, R)}-\frac{1}{C^2(\alpha, R)}\\
&=O(R)e^{-\alpha R}
\end{split}
\end{equation}

For $r_u+r_v\ge R$,
\begin{equation}\label{eq18}
\begin{split}
P_2 &= \frac{(\theta_R-\theta_B)^2}{2\theta_A\theta_C}\iint\limits_{r_u+r_v\ge R}f(r_u)f(r_v)dr_u dr_v\\
&= \frac{\alpha e^R}{2\theta_A\theta_C C^2(\alpha,R)}\cdot O(R)\cdot e^{-\alpha R}
\end{split}
\end{equation}
\end{proof}

Theorem~\ref{theorem1} indicates that the communities can be distributed on the hyperbolic disk according to their relative conductance correlation.

\begin{lemma}\label{measureRHD}(Lemma 3.2 in \cite{gugelmann2012random})
If $0\le r< R$, then $\mu(B_0(r))=e^{-\alpha(R-r)}(1+o(1))$
\end{lemma}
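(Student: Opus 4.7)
The plan is to reduce the lemma to an elementary computation with the radial density given in the random hyperbolic disk section, and then read off the asymptotic behaviour by expanding the hyperbolic cosines as exponentials. Since the angular coordinate is uniform on $[0,2\pi]$ and independent of the radial coordinate, the measure of a hyperbolic ball of radius $r$ about the origin is just the radial cumulative distribution $\int_0^r p(s)\,ds$ of the density $p(s) = \alpha\sinh(\alpha s)/(\cosh(\alpha R)-1)$ introduced earlier.

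First I would integrate directly: since $\int_0^r \alpha\sinh(\alpha s)\,ds = \cosh(\alpha r) - 1$, one obtains the exact identity
\begin{equation*}
\mu(B_0(r)) = \frac{\cosh(\alpha r) - 1}{\cosh(\alpha R) - 1}.
\end{equation*}
At this point the lemma becomes a purely asymptotic statement about a ratio of $\cosh$'s, with no further probabilistic content.

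Next I would expand $\cosh(x) = \tfrac{1}{2}(e^x + e^{-x})$, multiply numerator and denominator by $2e^{-\alpha R}$, and pull the factor $e^{-\alpha(R-r)}$ out of the numerator. A short manipulation yields the closed form
\begin{equation*}
\mu(B_0(r)) = e^{-\alpha(R-r)} \cdot \frac{(1 - e^{-\alpha r})^2}{(1 - e^{-\alpha R})^2}.
\end{equation*}
As $R\to\infty$ the denominator is $1 + O(e^{-\alpha R})$, and in the regime where $\alpha r$ also diverges the numerator is $1 + O(e^{-\alpha r})$. Multiplying these two $1 + o(1)$ factors gives the claimed form $e^{-\alpha(R-r)}(1 + o(1))$.

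The principal subtlety I anticipate is pinning down the range of $r$ over which the $(1 + o(1))$ estimate is uniform. For $r$ bounded (or $\alpha r$ not tending to infinity) the factor $(1 - e^{-\alpha r})^2$ is bounded away from $1$, so the clean asymptotic really requires $\alpha r \to \infty$. I would state this hypothesis explicitly, following Gugelmann et al., and observe that this is precisely the regime in which the lemma will be invoked downstream: in the random hyperbolic disk, nodes concentrate near the boundary because $p(s)$ is weighted by $\sinh(\alpha s)$, and the regime $\alpha r \to \infty$ is exactly what appears in the $r_u + r_v \ge R$ case used in Theorem~\ref{theorem1}.
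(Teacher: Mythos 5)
Your derivation is correct: the paper itself gives no proof of this lemma (it is imported verbatim as Lemma~3.2 of Gugelmann et al.), so there is nothing internal to compare against, and your argument is precisely the standard one from the cited source — integrate the radial density to get the exact identity $\mu(B_0(r))=\frac{\cosh(\alpha r)-1}{\cosh(\alpha R)-1}$, rewrite it as $e^{-\alpha(R-r)}\frac{(1-e^{-\alpha r})^2}{(1-e^{-\alpha R})^2}$, and read off the asymptotics. Your closing caveat is also well taken and worth keeping: the multiplicative error is $1+\Theta(e^{-\alpha r})$ plus an $O(e^{-\alpha R})$ term, so the clean $(1+o(1))$ form is only uniform when $\alpha r\to\infty$, which is exactly the boundary-concentration regime in which the lemma is used downstream (e.g.\ in the $r_u+r_v\ge R$ case of Theorem~\ref{theorem1}); for bounded $r$ one should retain the exact correction factor.
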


\begin{lemma}\label{PPP}(\cite{friedrich2018on})
We consider to use the Poisson point process to describe the nodes distribution in random hyperbolic disk, which states the probability that there exists at least one point in some area is: \begin{equation}
P(\exists v\in S)=1-{\rm exp}\left(-n\cdot\mu(S)\right)
\end{equation}
\end{lemma}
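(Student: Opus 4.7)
The plan is to invoke the defining property of a Poisson point process and then specialize to the event that $S$ is non-empty. Write $N(S)$ for the number of nodes falling inside a measurable region $S\subseteq$ (hyperbolic disk). Under the Poisson point process modelling assumption with intensity measure $n\cdot\mu$, where $\mu$ denotes the probability measure on the disk induced by the radial density $p(r)$ together with the uniform angular law, the random variable $N(S)$ is Poisson-distributed with parameter $\lambda_S=n\cdot\mu(S)$.

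First I would write out the Poisson probability mass function, $P(N(S)=k)=e^{-\lambda_S}\lambda_S^{k}/k!$, and then observe that the event $\{\exists v\in S\}$ is exactly the complement of $\{N(S)=0\}$. Setting $k=0$ gives $P(N(S)=0)=e^{-n\cdot\mu(S)}$, and taking the complement produces the desired identity $P(\exists v\in S)=1-\exp(-n\cdot\mu(S))$. This is a one-line computation once the Poisson framework is in place.

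The main obstacle is \emph{justifying the Poissonization step}: strictly speaking, the random hyperbolic disk model of the previous section places exactly $n$ i.i.d.\ nodes, so $N(S)$ is binomial $(n,\mu(S))$ rather than Poisson. To recover the claimed closed form I would either (i) adopt the Poisson point process formulation from the outset, as is standard in the asymptotic analysis of random hyperbolic graphs following Gugelmann et al.\ and Friedrich--Krohmer, or (ii) appeal to the classical coupling between Binomial$(n,p)$ and Poisson$(np)$ laws, whose total variation distance is $O(np^{2})$ and is negligible for the small regions $S$ that arise in combination with Lemma~\ref{measureRHD}. I would flag explicitly which convention is adopted so that later applications — in particular, invocations of Lemma~\ref{measureRHD} alongside this lemma to bound occupancy probabilities of small balls $B_0(r)$ — remain internally consistent.
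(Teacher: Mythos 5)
Your proposal is correct: the identity is exactly the void-probability formula for a Poisson point process with intensity measure $n\cdot\mu$, obtained by setting $k=0$ in the Poisson mass function and taking the complement. The paper itself supplies no proof of this lemma --- it is stated as an imported result with a citation --- so there is no in-paper argument to diverge from; your derivation is the standard (and essentially only) one. Your caveat about the Poissonization step is well taken and is the only substantive issue: Section~2 of the paper defines the model with exactly $n$ i.i.d.\ nodes, under which $N(S)$ is Binomial$(n,\mu(S))$ and the void probability is $(1-\mu(S))^n$ rather than $e^{-n\mu(S)}$; these agree only asymptotically (or after switching to the Poissonized model, as the cited reference does). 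Flagging which convention is in force, as you propose, is necessary for the lemma to be used consistently with Lemma~\ref{measureRHD} and Lemma~\ref{HighDegreeNodeProb}, where the paper silently substitutes $\mu(B_0(R/2))=e^{-\alpha R/2}(1+o(1))$ into this formula.
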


\begin{lemma}\label{HighDegreeNodeProb}
We consider nodes with radial coordinate less that $\frac{R}{2}$ as the core node set of the random hyperbolic disk. For a graph with $N$ nodes, if we partition it into at most $K$ sectors with equal angles, then with probability $p=1-{\rm exp}\left(-n^{1-\alpha}/K\right)$, each sector has at least one core node.
\end{lemma}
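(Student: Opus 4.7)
The plan is to apply the Poisson point process bound of Lemma~\ref{PPP} to the intersection of one equiangular sector with the core disk $B_0(R/2)$, and to evaluate the resulting measure by combining Lemma~\ref{measureRHD} with the uniformity of the angular distribution. Substituting $R=2\log n+C$ then produces the claimed expression.

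First I would exploit the product structure of the node distribution: since $\theta$ is uniform on $[0,2\pi]$ and independent of $r$, for any of the $K$ equiangular sectors $S_k$,
$$\mu\bigl(S_k\cap B_0(R/2)\bigr)=\frac{1}{K}\,\mu\bigl(B_0(R/2)\bigr).$$
Next I would invoke Lemma~\ref{measureRHD} with $r=R/2$ to obtain $\mu(B_0(R/2))=e^{-\alpha R/2}(1+o(1))$, and plug in $R=2\log n+C$ to see that this is $\Theta(n^{-\alpha})$, the constant $e^{-\alpha C/2}$ being absorbed into the hidden constant (or equivalently into $C$). Multiplying by $n$ and inserting into Lemma~\ref{PPP} then yields
$$P\bigl(\exists v\in S_k\cap B_0(R/2)\bigr)=1-\exp\!\bigl(-n^{1-\alpha}/K\bigr),$$
which is the advertised bound applied to a single sector.

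The main subtlety is promoting this single-sector estimate to the simultaneous assertion that \emph{every} sector contains at least one core node. Because the Poisson process restricted to disjoint measurable sets is independent, the probability of the conjunction factors as $\bigl(1-\exp(-n^{1-\alpha}/K)\bigr)^K$; using $(1-x)^K\ge 1-Kx$, this can be written as $1-K\exp(-n^{1-\alpha}/K)$, which tends to $1$ provided $K=o(n^{1-\alpha})$ (a natural regime under the scale-free assumption $\alpha>\tfrac12$). I expect this interpretive step — rather than any hard calculation — to be the only real obstacle, and I would clarify it explicitly in the final write-up.
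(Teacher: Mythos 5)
Your derivation is correct and follows the route the paper evidently intends: the lemma is stated without an explicit proof, immediately after Lemma~\ref{measureRHD} and Lemma~\ref{PPP}, and combining them exactly as you do --- $\mu\bigl(S_k\cap B_0(R/2)\bigr)=\tfrac{1}{K}e^{-\alpha R/2}(1+o(1))$, which becomes $e^{-\alpha C/2}n^{-\alpha}(1+o(1))/K$ after substituting $R=2\log n+C$, then feeding $n\mu$ into the Poisson bound --- reproduces the stated expression for a single sector. The substantive point you add, and rightly so, is that $1-\exp\bigl(-n^{1-\alpha}/K\bigr)$ as written is only the per-sector probability; the event that \emph{every} sector contains a core node has probability $\bigl(1-\exp(-n^{1-\alpha}/K)\bigr)^K\ge 1-K\exp\bigl(-n^{1-\alpha}/K\bigr)$ by independence of the Poisson process on disjoint sectors together with Bernoulli's inequality, and this tends to $1$ only in the regime $K=o(n^{1-\alpha})$, i.e.\ for $\tfrac12<\alpha<1$. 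That correction, along with the constant $e^{-\alpha C/2}$ that the statement silently absorbs, should be made explicit if the lemma is to be invoked downstream, e.g.\ in Theorem~\ref{better}.
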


In \cite{blasius2018efficient}, the author provides an approximate expression for the relative angle between two core nodes. However, in the following theorem we will show that the community based core nodes circular ordering estimation is more robust than common neighbor based one.
\begin{theorem}\label{better}
The expectation number of two core nodes for a scale free network with $n$ nodes is $O(n^{2\alpha(1-\alpha)})$. While the relative conductance based core nodes circular ordering is $O(n^{2(1-\alpha)}{\rm log}n)$. As the expectation of relative conductance is greater than common neighbor, by employing Chernoff bound, the confidence interval is sharper than the latter.
\end{theorem}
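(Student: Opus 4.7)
The plan is to prove this in three stages: bound the expected common-neighbor count between two core nodes, bound the expected number of cross-sector edges (the numerator of relative conductance), and then compare the two using a multiplicative Chernoff bound.

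\textbf{Stage 1: expected common neighbors of two core nodes.} Fix two core nodes $u,v$ with $r_u,r_v<R/2$. A common neighbor $w$ must satisfy $d(u,w)\le R$ and $d(v,w)\le R$, so $w$ lies in $B_u(R)\cap B_v(R)$. I would estimate $\mu(B_u(R)\cap B_v(R))$ by translating one ball to the origin and invoking Lemma~\ref{measureRHD}, which gives $\mu(B_0(r))=e^{-\alpha(R-r)}(1+o(1))$. Integrating this measure against the radial density $p(r)=\alpha\sinh(\alpha r)/(\cosh(\alpha R)-1)$ for $w$, then multiplying by the Poisson intensity $n$ from Lemma~\ref{PPP}, and finally substituting $R=2\log n+C$, should yield the claimed $O(n^{2\alpha(1-\alpha)})$ expectation after the exponents $2\alpha$ from the two balls and $-2$ from $n^2$ collide appropriately.

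\textbf{Stage 2: expected relative conductance.} I would reuse the sector framework of Theorem~\ref{theorem1}. Partition the disk into $K$ equal-angle sectors, each containing $\mathbb{E}|S|=n/K$ nodes by \eqref{eq16}; by Lemma~\ref{HighDegreeNodeProb} each sector has a core node w.h.p. The expected number of edges between the sectors of two core nodes is $|A||B|(P_1+P_2)$, where the edge probabilities $P_1,P_2$ carry the factor $O(R)e^{-\alpha R}$ derived in \eqref{eq17} and \eqref{eq18}. Substituting $R=2\log n+C$ and absorbing the sector sizes gives $O(n^{2(1-\alpha)}\log n)$.

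\textbf{Stage 3: Chernoff comparison.} Both estimators are sums of Bernoulli indicators (one indicator per candidate vertex in Stage~1, one per candidate edge in Stage~2), so the multiplicative Chernoff inequality gives $\mathbb{P}(|X-\mu|\ge\delta\mu)\le 2\exp(-\delta^2\mu/3)$. Fixing the failure probability, the relative confidence half-width scales as $\delta=\Theta(\mu^{-1/2})$. Since $\alpha\in(1/2,1)$ implies $2(1-\alpha)>2\alpha(1-\alpha)$, the relative-conductance mean dominates the common-neighbor mean (further amplified by the $\log n$ factor), so its confidence interval is strictly narrower. The robustness claim for the circular-ordering estimator then follows from a union bound over the $O(K^2)$ core-node pairs.

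The hard part will be Stage~1: carefully bounding $\mu(B_u(R)\cap B_v(R))$ requires controlling the dependence on the angular separation $\theta_u-\theta_v$ via the hyperbolic law of cosines, and then integrating against $p(r_u)p(r_v)$ restricted to the core region $r<R/2$ with matching constants. The exponent $2\alpha(1-\alpha)$ emerges only after balancing the $e^{\alpha r}$ growth of the density against the $e^{-\alpha(R-r)}$ decay of the ball measure, and getting this balance sharp enough to compare with Stage~2 is the main technical obstacle.
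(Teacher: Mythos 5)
You should know at the outset that the paper states Theorem~\ref{better} with no proof at all, so there is no authorial argument to compare yours against; your three-stage outline is a plausible reconstruction assembled from exactly the ingredients the paper does set up (Lemma~\ref{measureRHD}, Lemma~\ref{PPP}, Lemma~\ref{HighDegreeNodeProb}, and the sector calculation inside Theorem~\ref{theorem1}), and in that sense it is already more than the paper offers. Judged as a proof, however, it has gaps that are not merely technical polish. In Stage~1, Lemma~\ref{measureRHD} only measures balls centered at the origin; you cannot reduce $\mu(B_u(R)\cap B_v(R))$ to that case by ``translating one ball to the origin,'' because an isometry moving $u$ to $0$ does not preserve the disk $B_0(R)$ carrying the vertex measure, and the intersection measure depends essentially on $\Delta\theta(u,v)$. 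You correctly flag this as the hard part, but you never exhibit where the exponent $2\alpha(1-\alpha)$ comes from, and that exponent is the entire content of the claim. A useful sanity check you could add: any two core nodes satisfy $d(u,v)\le r_u+r_v\le R$, so all $\Theta(n^{1-\alpha})$ core nodes are pairwise adjacent and are automatically common neighbours of $u$ and $v$; since $2\alpha(1-\alpha)>1-\alpha$ for $\alpha\in(\tfrac12,1)$, the claimed bound must be carried by the non-core common neighbours, and your integral has to capture exactly that contribution.

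Stage~2 and Stage~3 have their own problems. Plugging the paper's own estimates $P_1=O(R)e^{-\alpha R}$ and $P_2$ from the proof of Theorem~\ref{theorem1} into $|A||B|(P_1+P_2)$ with $R=2\log n+C$ does not visibly yield $n^{2(1-\alpha)}\log n$ (the exponents agree only at $\alpha=\tfrac12$), so ``absorbing the sector sizes'' is doing unexplained work; you also never fix how $K$ scales with $n$, which changes the answer, and you are comparing two quantities stated only as $O(\cdot)$ upper bounds, from which no ``greater than'' conclusion follows without matching lower bounds. In Stage~3 the cross-sector edge count is a sum of indicators over vertex pairs that share endpoints, so they are not independent random variables and the multiplicative Chernoff bound does not apply as written; you would need to condition on the vertex positions (under which edges are independent in the temperature model) and then handle the randomness of the positions separately, e.g.\ by Azuma or bounded differences. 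Finally, ``larger mean implies narrower relative confidence interval'' is the right heuristic, but to conclude that the conductance-based circular ordering is more robust you must also show that both counts respond to the angular coordinate with comparable sensitivity, i.e.\ that a fixed relative error in the count translates into a comparable or smaller error in the inferred angle; neither your sketch nor the paper addresses this step.
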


%
%
%
%

It seems that our proposed relative community conductance is similar to the Community Intimacy (CI) in \cite{wang2016hyperbolic}. Actually, it can be easily verified that the following inequality holds. 

\begin{equation}
\label{RBound}
\begin{split}
\frac{1}{2}CI(A, B)\le R(A,B)\le CI(A,B)
\end{split}
\end{equation}

In the next section, however we will show the hidden relationship between Community Conductance and the Map Equation. And the analogy between Map Equation and Poincare Disk embedding is also revealed.

\section{M\lowercase{ap} E\lowercase{quation} R\lowercase{eveals} B\lowercase{est} A\lowercase{nalogy} \lowercase{with} H\lowercase{yperbolic} G\lowercase{eometry}}

\subsection{Map Equation exhibits best resolution limit}
\begin{theorem}\label{infomapResolutionLimit}
For networks with total number of $L$ links, the number of modules that infomap algorithm detects has intrinsic scale as following:
\begin{equation}
m^*=O\left(L/{\rm ln}L\right)
\end{equation}
\end{theorem}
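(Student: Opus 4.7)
The plan is to balance the two entropy terms of the Map Equation and solve for the partition size at which further refinement stops reducing the description length.

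First, I would write the Map Equation in its canonical two-level form
$$L(M) = q \, H(\mathcal{Q}) + \sum_{i=1}^{m} p^{i} H(\mathcal{P}^{i}),$$
where $q$ is the stationary rate at which the random walker crosses module boundaries, $H(\mathcal{Q})$ is the entropy of the module-index codebook, and $p^{i}$, $H(\mathcal{P}^{i})$ are the visit frequency and intra-module codebook entropy. For an undirected network with $L$ edges the ergodic distribution is $\pi_v = \deg(v)/(2L)$, so both $q$ and every $p^{i}$ are expressible in terms of cut sizes and volumes of the candidate partition.

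Second, I would specialize to the symmetric benchmark of $m$ equal-sized modules, each carrying $\ell = L/m$ internal links and a boundary of $c$ external links. Using $H(\mathcal{Q}) \leq \ln m$ and bounding the internal entropy by the maximum-entropy bound $H(\mathcal{P}^{i}) \leq \ln(\ell + c)$ on the module's volume, the Map Equation reduces, to leading order, to the one-parameter upper bound
$$L(M) \leq \frac{mc}{2L}\ln m + \ln\!\bigl(L/m + c\bigr) + o(1),$$
turning the resolution-limit question into a straightforward minimization in $m$.

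Third, I would solve the first-order condition. Differentiating the bound with respect to $m$ shows that the intra-module saving from a further split scales as $1/m$ while the additional module-codebook cost scales as $(c/L)\ln m$; equating them and iterating $\ln m \asymp \ln L$ self-consistently yields $m^{\star} = \Theta(L/\ln L)$. The lower bound on the cut parameter is provided by the observation that even a single bridging edge is already enough for the Map Equation to prefer the merged partition once the modules become too small, so $c = \Theta(1)$ at the crossover, and the ceiling $m^{\star} = O(L/\ln L)$ follows.

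The main obstacle will be tightening the maximum-entropy upper bound on $H(\mathcal{P}^{i})$ so that it captures the $\ln L$ factor rather than a weaker $\sqrt{L}$ bound, and in particular showing that at the critical module size the within-module walker has already mixed to a nearly uniform distribution on the module's volume. For this I would exploit the regularity of the degree sequence produced by the random hyperbolic disk model of Section 2 — specifically Lemma \ref{measureRHD} controlling the measure $\mu(B_0(r)) = e^{-\alpha(R-r)}(1+o(1))$ — to show that the intra-module degree distribution concentrates enough for the entropy approximation to be tight up to additive $o(1)$ error, which suffices to close the argument and conclude $m^{\star} = O(L/\ln L)$.
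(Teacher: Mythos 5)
Your overall route is the same as the paper's: reduce the map equation to a one-parameter function of $m$ on a symmetric benchmark partition with $O(1)$ boundary links per module, then balance the module-codebook cost against the intra-module saving to get $L/m \asymp \ln m \asymp \ln L$. The paper executes this by (i) rewriting $q_{i\curvearrowright}=\mathrm{cut}(i)/d$ so that the partition-dependent part of $L(M)$ is an exact function of the cuts and volumes alone, (ii) justifying the symmetric benchmark via a monotonicity argument in $\mathrm{cut}(i)$ plus a Lagrange-multiplier computation showing the minimum occurs when $\mathrm{cut}(i)+\mathrm{vol}(i)$ is equal across modules, (iii) fixing $\mathrm{cut}(i)=2$ (a ring of modules), and (iv) solving the resulting transcendental stationarity condition $\ln(m+L)=L/m-1$, whose solution is $m=L/\ln L$ up to an additive error tending to a constant. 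Your first-order balance is the leading-order version of that same transcendental equation, so the core of your argument matches.

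Two concrete issues with your execution plan. First, you minimize an \emph{upper bound} on $L(M)$ obtained from $H(\mathcal{Q})\le\ln m$ and $H(\mathcal{P}^i)\le\ln(\ell+c)$; minimizing an upper bound does not locate the minimizer of $L(M)$ itself, and you also supply no argument that the symmetric partition is extremal (the paper's Lagrange-multiplier step is exactly what fills this hole). Second, your proposed fix for the entropy bound --- invoking Lemma~\ref{measureRHD} and degree concentration in the random hyperbolic disk model --- is both off-target and unnecessary: the theorem is a statement about arbitrary networks with $L$ links, not about the RHG model, and once the partition-independent terms $\sum_\alpha p_\alpha\log p_\alpha$ are dropped, the remaining module term is $\bigl(q_{i\curvearrowright}+\sum_{\alpha\in i}p_\alpha\bigr)\log\bigl(q_{i\curvearrowright}+\sum_{\alpha\in i}p_\alpha\bigr)$, which depends only on the module's cut and volume and requires no control of the internal degree distribution. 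Replacing your bounds with this exact cut-and-volume expression removes the obstacle you identify and closes the argument.
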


\begin{proof}
Recall that the original infomap algorithm is to minimize the following equation:
\begin{equation}
\begin{split}
L(M)&=q_{\curvearrowright}H(Q)+\sum\limits_{i=1}^Mp_{\circlearrowleft}^iH(P_i)\\
&=\left(\sum_{i=1}^mq_{i\curvearrowright}\right){\rm log}\left(\sum_{i=1}^mq_{i\curvearrowright}\right)-2\sum_{i=1}^mq_{i\curvearrowright}{\rm log}q_{i\curvearrowright}\\
&+\sum_{i=1}^m\left(q_{i\curvearrowright}+\sum_{\alpha\in i}p_{\alpha}\right){\rm log}\left(q_{i\curvearrowright}+\sum_{\alpha\in i}p_{\alpha}\right)
\end{split}
\end{equation}
where $M$ denotes a partition of the network. For a undirected unweighted network, we can easily get the following formulas.

\begin{equation}
\begin{split}
q_{i\curvearrowright}&=\sum\limits_{\alpha\in i, \beta\notin j}p_\alpha p_{ij}\\
&=\sum\limits_{\alpha\in i, \beta\notin j}\frac{d_\alpha}{d}\frac{A_{\alpha\beta}}{d_\alpha}\\
&=\frac{{\rm cut}(i)}{d}
\end{split}
\end{equation}

We plug in the above formulas and denotes $vol(i)$ as the sum of nodes' degree in partition $i$ and transform the original map equation as following:
\begin{equation}
\begin{split}
L(M)&=\left(\sum_{i=1}^m\frac{{\rm cut}(i)}{d}\right){\rm log}\left(\sum_{i=1}^m\frac{{\rm cut}(i)}{d}\right)\\
&-2\sum_{i=1}^m\frac{{\rm cut}(i)}{d}{\rm log}\frac{{\rm cut}(i)}{d}\\
&+\sum_{i=1}^m\left(\frac{{\rm cut}(i)}{d}+\frac{{\rm vol}(i)}{d}\right){\rm log}\left(\frac{{\rm cut}(i)}{d}+\frac{{\rm vol}(i)}{d}\right)
\end{split}
\end{equation}

We first compute the derivative of $L(M)$ with respect to $cut(i)$:
\begin{equation}
\frac{\partial L}{\partial {\rm cut}(i)} = \frac{1}{d}{\rm log}\frac{\sum_{i=1}^m{\rm cut}(i)\left({\rm cut}(i)+{\rm vol}(i)\right)}{{\rm cut}(i)^2}
\end{equation}

which is is always greater than 0 which states that the value of map equation will increase as the increasing of total cut edges between modulars. On the other hand, we can employ the lagrange multiplier method to compute the maxima of $L(M)$ with regards to $vol(i)$.

\begin{equation}
\begin{split}
L'&=L(M)+\lambda(\sum_{i=1}^m{\rm vol}(i)-d)
\end{split}
\end{equation}

We perform derivative on map equation with the total volumes constraints as following:

\begin{equation}
\begin{split}
\frac{L'}{{\rm vol}(i)}&=\frac{1}{d}{\rm log}\frac{{\rm cut}(i)+{\rm vol}(i)}{d}+\frac{1}{d}+\lambda\\
\end{split}
\end{equation}

We let the above derivative equals to 0 and get the conclusion that the map equation minimizes when each modular's volume and cut is identical.
\begin{equation}
\begin{split}
{\rm cut}(i)+{\rm vol}(i)=\frac{\sum_{i=1}^m{\rm cut}(i)+d}{m}
\end{split}
\end{equation}

\begin{figure}
\centering
\includegraphics[width=0.2\textwidth]{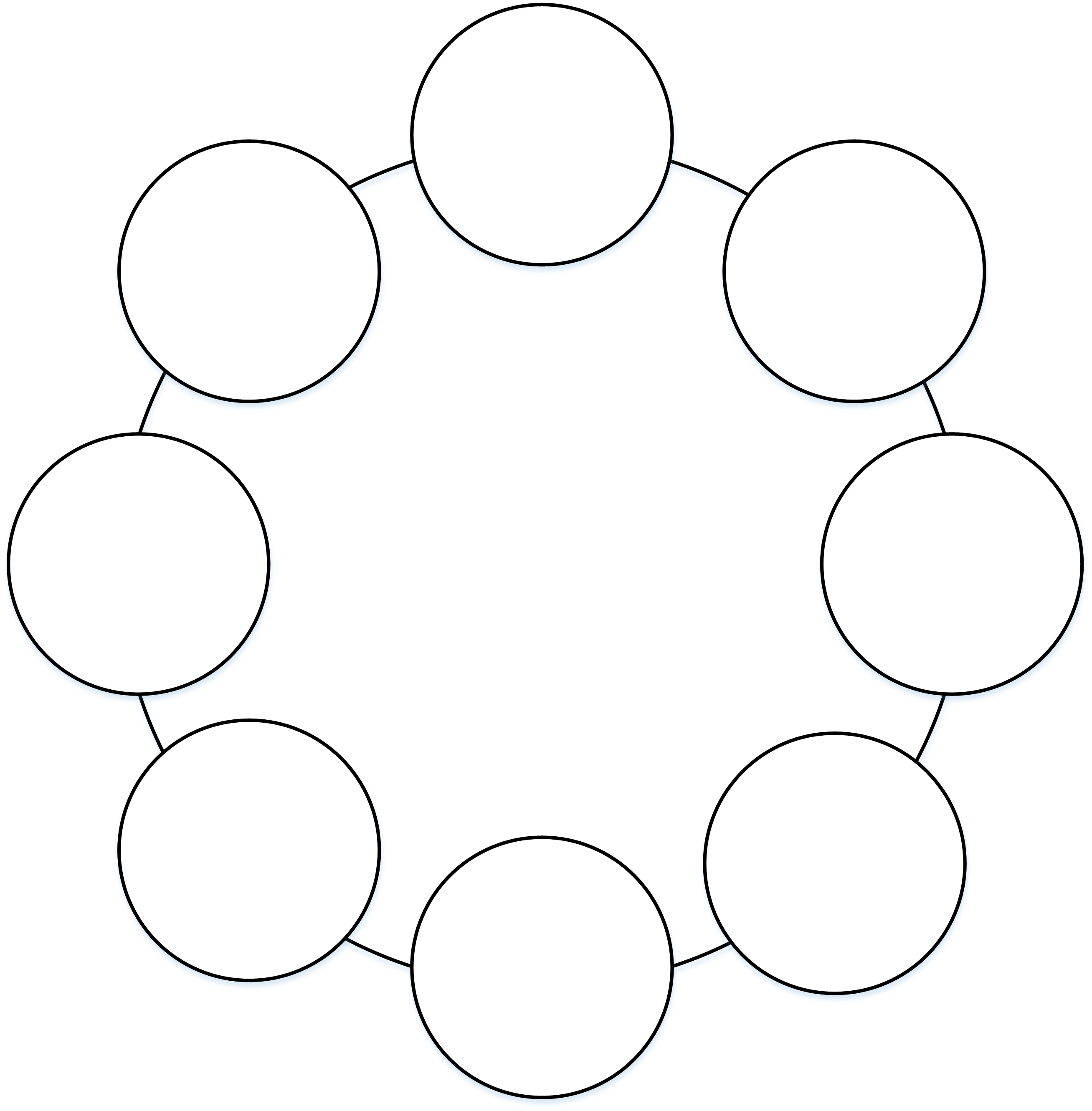}
\caption{Illustration for ring structure of modulars}
\label{RingCluster}
\end{figure}

For clarity and simplicity of mathematical analysis, we assume each $cut(i)$ equals 2 which means that the modular structures as a ring (as illustrated in Figure~\ref{RingCluster}). Then the map equation transforms to the following forms under the above assumption:
\begin{equation}
\begin{split}
L(M)&=\frac{2m}{d}{\rm log}\frac{2m}{d}-\frac{4m}{d}{\rm log}\frac{2}{d}+\frac{2m+d}{d}{\rm log}\frac{2m+d}{dm}
\end{split}
\end{equation}

Then we compute the minima by performing derivativation of the above formula. However we get an transcendental equation.
\begin{equation}
{\rm log}(m+L)=\frac{L}{m}-1
\end{equation}
which can only be approximation by numerical methods. While after observation, we can get a well approximated analytical solution:
\begin{equation}
m=\frac{L}{{\rm ln}L}
\end{equation}
which has constant distortion with the optimal solution since the error goes to constant 1 when $L$ goes to infinity

\begin{equation}
error = {\rm ln}\left(1+\frac{1}{{\rm ln}L}\right) + 1 \longrightarrow 1
\end{equation}

which ends the proof.

\end{proof}

\subsection{Map Equation performs edge balanced partition}
\begin{theorem}\label{infomapBalanceEdgePartitionEq}
The map equation partition objective function is equivalent to $(k, \nu)$-balanced edge partition in which $k$ denotes the number of clusters and $\nu=\mathop{max}\{\alpha_i\}$ equals to the maximal conductance of partitioned clusters.
\end{theorem}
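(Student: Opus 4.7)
The plan is to match the two constitutive conditions of a $(k,\nu)$-balanced edge partition against the two first-order conditions that already appeared in the proof of Theorem~\ref{infomapResolutionLimit}. Recall that a $(k,\nu)$-balanced edge partition into $m=k$ blocks $V_1,\ldots,V_m$ requires (a) that the blocks carry approximately equal numbers of incident edges, so $|E(V_i)|$ is block-invariant, and (b) that $\nu=\max_i\alpha_i$ with $\alpha_i={\rm cut}(i)/\min\{{\rm vol}(i),{\rm vol}(\bar{V_i})\}$ is driven as small as possible under (a). My objective is to identify each of these requirements with one of the stationarity conditions that any minimizer of $L(M)$ must satisfy.

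First I would reuse the cut/volume reformulation of $L(M)$ derived earlier. Using the elementary identity ${\rm vol}(V_i)=2|E_{\text{int}}(V_i)|+{\rm cut}(i)$, I obtain ${\rm cut}(i)+{\rm vol}(i)=2|E(V_i)|$. Hence the Lagrangian stationarity condition ${\rm cut}(i)+{\rm vol}(i)=\left(\sum_i{\rm cut}(i)+d\right)/m$ established in the previous proof translates directly into the edge-balance condition (a). Next I would exploit the monotonicity $\partial L/\partial{\rm cut}(i)>0$ also established earlier: for a partition already satisfying (a), the quantity ${\rm cut}(i)+{\rm vol}(i)=2|E(V_i)|$ is pinned, so shrinking ${\rm cut}(i)$ (and symmetrically inflating ${\rm vol}(i)$) is the only remaining way to lower $L(M)$. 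Because ${\rm vol}(i)$ is then an affine function of ${\rm cut}(i)$, this shrinkage monotonically decreases $\alpha_i$, and applied in parallel to every block it pushes $\nu=\max_i\alpha_i$ to its minimum feasible value under the chosen $k=m$, giving condition (b).

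The main obstacle I anticipate is controlling the coupling between the two conditions: the edge-balance condition was derived with each ${\rm cut}(i)$ held fixed, while the conductance-reduction argument holds ${\rm cut}(i)+{\rm vol}(i)$ fixed, and I must confirm that the joint minimizer coincides with the candidate that satisfies both partial conditions simultaneously. I would address this by establishing local convexity of $L(M)$ in the coordinates $({\rm cut}(i),{\rm vol}(i))$ at the candidate point, leveraging the fact that each summand has the form $x\log x$ whose second derivative is strictly positive on $(0,1)$; this guarantees the separable analysis actually identifies the unique minimizer under the normalization $\sum_i{\rm vol}(i)=d$. A secondary subtlety is the sign convention when some block exceeds half the total volume, in which case $\min\{{\rm vol}(i),{\rm vol}(\bar{V_i})\}={\rm vol}(\bar{V_i})$ and the dependence of $\alpha_i$ on ${\rm vol}(i)$ flips; I would dispose of this by symmetrizing over block/complement roles, so that the equivalence between minimizers of $L(M)$ and $(k,\nu)$-balanced edge partitions holds in both regimes.
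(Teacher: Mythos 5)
Your proposal is correct and follows essentially the same route as the paper's proof: both reuse the two facts established in the proof of Theorem~\ref{infomapResolutionLimit} --- the monotonicity $\partial L/\partial{\rm cut}(i)>0$ to recover the cut-minimization objective, and the stationarity condition ${\rm cut}(i)+{\rm vol}(i)=\left(\sum_{i}{\rm cut}(i)+d\right)/m$ to recover the balance constraint --- and then close the equivalence by setting $\nu=\max_i\alpha_i$. Your additional touches (the identity ${\rm cut}(i)+{\rm vol}(i)=2|E(V_i)|$ making the edge-balance reading explicit, the convexity check to decouple the two stationarity conditions, and the symmetrization when a block exceeds half the total volume) are refinements the paper omits rather than a different argument.
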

\begin{proof}
Recall that the $(k, \nu)$-balanced edge partition is formulated to the following objective function and constraints:
\begin{equation}
\begin{aligned}
\text{minimize}\quad &\sum\limits_{u\in V}\left(d_{in}(u)-\sum\limits_{i=1}^k z_{u,i}\left(\sum\limits_{v\in N_{in}(u)x_{(v,u),i}}\right)\right)\\
\text{subject to}\quad &\sum\limits_{u\in V} d_{in}(u)z_{u,i}\le (1+\nu)\frac{m}{k},i\in[k]\\
&\sum\limits_{i=1}^k x_{e,i}=1, e\in E\\
&\sum\limits_{i=1}^k z_{u,i}=1, u\in V\\
& x\in\{0, 1\}^{m \times k}\\
& z\in\{0, 1\}^{n \times k}
\end{aligned}
\end{equation}
in which, $x_{e, i}=1$ if each edge $e$ is assigned to cluster $i$. And the second and fourth constrains assure that each edge is assigned to exactly one cluster. Then the objective function is equivalent to minimize the total number of edges across different clusters. The first constraint condition controls the balance of edge partition. According to the analysis in Theorem~\ref{infomapResolutionLimit}, the map equation minimizes when the sum of cut and volume of each cluster $i$ is equal:
\begin{equation}
C_i+V_i=\frac{\sum\limits_{i=1}^K (C_i+V_i)}{K}
\end{equation}
Denote the conductance of cluster $i$ is $\alpha_i$:
\begin{equation}
\alpha_i = \frac{C_i}{V_i}
\end{equation}
Then:
\begin{equation}
\begin{aligned}
C_i+V_i&=\frac{\sum\limits_{i=1}^K (\alpha_i V_i + V_i)}{K}\\
V_i&=\frac{\sum\limits_{i=1}^K ((1 + \alpha_i) V_i}{K}-\frac{\alpha_i K V_i}{K}
\end{aligned}
\end{equation}
Then map equation is equivalent to $(k, \nu)$-balanced edge partition if we set $\nu=\mathop{max}\limits_{i}\{\alpha_i\}$.
\end{proof}

It should be noted that the resolution limit of modularity\cite{fortunato2007resolution} is $O(\sqrt{L})$ while for stochastic block model with minimum description length principle (SBM-MDL) the detection limit for the size of block is $\sqrt{N}$\cite{peixoto2013parsimonious} . Then we can get the conclusion that the infomap algorithm excels modularity with regard to the resolution limit. It is well known that the community detection algorithms can be classified into three main categories:(1) null models include modularity and Louvain etc, (2) block models include SBM, DC-SBM, SBM-MDL, DCSBM-MDL etc, (3) flow models include Infomap etc. Resolution limit of the first two categories have been throughly studied. However the detectability of infomap algorithm is still unclear. In \cite{kawamoto2015estimating}, the author provides an condition relates the module volume and the total cut size. In our work, we will give a more clear and intrinsic resolution limit expression. The resolution limits of the most clustering algorithms are summarized in Table~\ref{resolutionLimitSummary}. In Table~\ref{ResolutionLimit}, we report the number of modules detected by Louvain, Infomap and SBM based clustering algorithms on networks from different domains and predicted by our resolution limit $O(L/{\rm log}L)$. The theoretical prediction versus the empirical community detection result is also illustrated in Figure~\ref{LlnL} for clarity. Details of the dataset employed in this table are listed in section~\ref{datasetSection}.

\begin{table}[]
\small
\begin{tabular}{|c|c|c|c|c|}
\hline
algorithm        & Modularity\upcite{fortunato2007resolution} & SBM-DL\upcite{peixoto2013parsimonious}    & N-SBM\upcite{peixoto2014hierarchical}   & Infomap \\ \hline
resolution & $\sqrt{L}$ & $\sqrt{N}$ &$N/{\rm ln}N$  & $L/{\rm ln}L$ \\ \hline
\end{tabular}
\caption{Summary on resolution limit of clustering algorithms. Resolution limit of Infomap is the new result in this work.}
\label{resolutionLimitSummary}
\end{table}

\begin{table}[]
\small
\begin{tabular}{|c|c|c|c|c|c|c|c|c|}
\hline
Dataset                         & Layer & N     & E     & L & SL & DCL & I & L/lnL \\ \hline
\multirow{2}{*}{Router}         & IPv4  & 37542 & 95019 & 31      & 239    & 176      & 1625    & 5742   \\ \cline{2-9}
                                & IPv6  & 5143  & 13446 & 19      & 86     & 79       & 418     & 980    \\ \hline
\multirow{3}{*}{C.E.}      & 1     & 248   & 514   & 9       & 15     & 19       & 29      & 57     \\ \cline{2-9}
                                & 2     & 258   & 887   & 9       & 26     & 21       & 23      & 91     \\ \cline{2-9}
                                & 3     & 278   & 1703  & 7       & 34     & 30       & 11      & 159    \\ \hline
\multirow{2}{*}{D.M.} & 1     & 752   & 1808  & 17      & 39     & 24       & 70      & 167    \\ \cline{2-9}
                                & 2     & 633   & 1343  & 17      & 41     & 20       & 68      & 129    \\ \hline
\multirow{6}{*}{arXiv}          & 1     & 1537  & 3935  & 32      & 58     & 53       & 130     & 330    \\ \cline{2-9}
                                & 2     & 2121  & 5473  & 35      & 72     & 61       & 190     & 438    \\ \cline{2-9}
                                & 3     & 129   & 273   & 10      & 13     & 12       & 17      & 34     \\ \cline{2-9}
                                & 4     & 3669  & 11969 & 46      & 113    & 94       & 290     & 884    \\ \cline{2-9}
                                & 5     & 608   & 1664  & 23      & 34     & 28       & 61      & 156    \\ \cline{2-9}
                                & 6     & 336   & 945   & 17      & 29     & 24       & 38      & 96     \\ \hline
\multirow{2}{*}{S.P.}      & 1     & 751   & 1213  & 21      & 30     & 29       & 86      & 118    \\ \cline{2-9}
                                & 2     & 182   & 246   & 13      & 14     & 9        & 28      & 31     \\ \hline
\multirow{2}{*}{Rattus}         & 1     & 1866  & 2657  & 32      & 29     & 32       & 129     & 234    \\ \cline{2-9}
                                & 2     & 529   & 647   & 20      & 18     & 18       & 61      & 69     \\ \hline
\end{tabular}
\caption{We report the number of modules detected in 17 networks from 6 different domains for Louvain, Infomap, SBM-MDL and DCSBM-MDL. In the last column  we report the theoretical resolution limit for infomap algorithm. 'L' denotes Louvain algorithm. 'SL' denotes SBM-MDL algorithm. 'DCL' denotes DCSBM-MDL algorithm. 'I' denotes Infomap algorithm.}
\label{ResolutionLimit}
\end{table}

\begin{figure}
\centering
\includegraphics[width=0.48\textwidth]{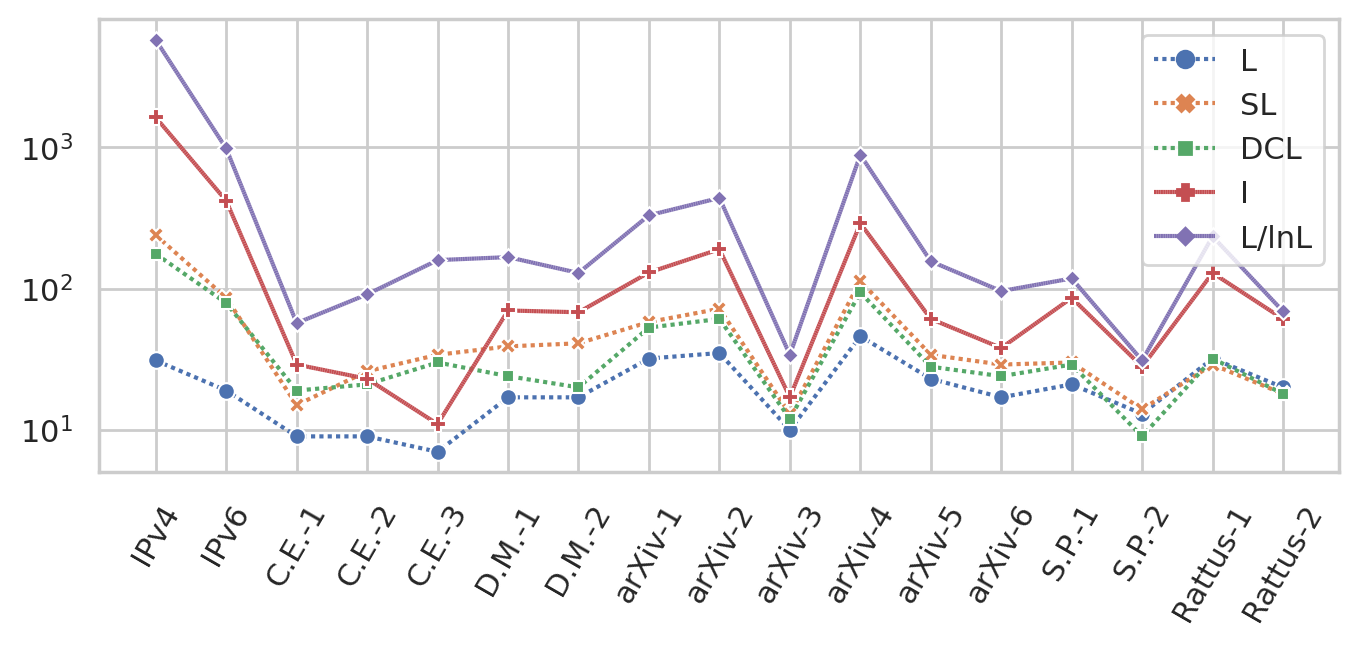}
\caption{Theoretical versus empirical resolution limit}
\label{LlnL}
\end{figure}

Although the analogy between hyperbolic embedding and community structure has been studied in previous research\cite{faqeeh2018characterizing}\cite{wang2016hyperbolic}\cite{wang2017community}. It is still not clear to what extent the detected community structure reveal the latent hyperbolic geometry of networks. In \cite{wang2016hyperbolic}\cite{wang2017community} modularity based algorithm is employed. While in \cite{faqeeh2018characterizing} Louvain and Infomap are used to characterize the module structure. In this work, we first investigate the coherence of hyperbolic geometry coordinates in the same community for different clustering algorithms. To this end, we compute the node coordinate coherence defined by euqation~\ref{coherenceEquation} following\cite{faqeeh2018characterizing}. 

\begin{equation}\label{coherenceEquation}
{\xi_g}=\frac{1}{n_g}\sqrt{\left\{\sum\limits_{j=1}^N\left(\delta_{\sigma_j,g}{\rm cos}\theta_j\right)\right\}^2 + \left\{\sum\limits_{j=1}^N\left(\delta_{\sigma_j,g}{\rm sin}\theta_j\right)\right\}^2}
\end{equation}

For each network we take the average for 100 instances of random hyperbolic model. The result is reported in Table~\ref{CoherenceVSclustering}. For Infomap we rely on the open implementation\footnote{https://www.mapequation.org/code.html}. For modularity, we employed the implementation in the open library of \verb|NetworkX|. For SBM and DC-SBM, the code was taken from the open project\footnote{https://github.com/funket/pysbm}. For SBM-MDL and DCSBM-MDL, the implementation relies on the open project \verb|graph-tools|. From Table~\ref{CoherenceVSclustering}, it can be found that Infomap algorithm excels all other clustering algorithms at all network scales. Figure~\ref{ClusteringIllustration} illustrates the clustering coherence for each clustering algorithm. The nodes rendered in the same color belongs to the same module. While modularity based algorithm and DCSBM-DL algorithm won two and four second places respectively. Furthermore, DCSBM-MDL performs better as the network scale increases. And modularity algorithm performs best at very small network and decays as the scale increases. SBM and DC-SBM perform worst due to a apriori block number needs to be specified which is however difficult. Intuitively, the best coherence can be maximized when the found module is dense and relative small. In DCSBM-DL, by employing the minimum description length principle, the maximum number of detectable blocks scales as $\sqrt{N}$. For sparse network (common in real networks), the number of edge is as the same scale of number of nodes. Then we can get the conclusion that Infomap algorithm has the best detectability of block structures among modularity, SBM, DCSBM and DCSBM-DL clustering algorithms. Furthermore, according to Theorem~\ref{infomapBalanceEdgePartitionEq}, infomap is equivalent to an edge balanced partition algorithm. This has been found to be more useful than conventional vertex based partition algorithm since most real networks are scale-free\cite{zhang2017graph}\cite{yang2017hypergraph}. One subtle difference between edge based partition and vertex based partition can be revealed by how to partition the core part of network. Since the high degree nodes will almost connect with each other. The vertex based partition algorithm will tend to enclose the high degree nodes as a module which will cause performance bottleneck in certain scenarios. While edge based partition instead prefers to cut the core nodes into several modules. In Figure~\ref{ClusteringIllustration}, this phenomenon can be found by comparing Infomap, modularity and SBM based algorithms. It is unclear whether modularity and SBM based algorithms are edge based or vertex based. However, by inspecting their resolution limit, we can classify them according to whether edge number or vertex number appeared in the expression. We since conjecture that modularity also belongs to the edge based partition algorithm. While SBM based partition algorithms belong to the vertex based partition algorithm.

\begin{table}[]
\small
\begin{tabular}{|c|c|c|c|c|c|c|}
\hline
N    & Infomap         & mod   & sbm    & sbm-dl  & dcsbm & dcsbm-dl      \\ \hline
100  & \textbf{0.9496} & {\ul 0.9260} & 0.8097 & 0.8497  & 0.8388 & 0.8321       \\ \hline
200  & \textbf{0.9784} & {\ul 0.9409} & 0.5688 & 0.9105  & 0.6857 & 0.8738       \\ \hline
500  & \textbf{0.9906} & 0.9516       & 0.2688 & 0.9554  & 0.3234 & {\ul 0.9671} \\ \hline
1000 & \textbf{0.9907} & 0.9463       & 0.1712 & 0.9485  & 0.2059 & {\ul 0.9772} \\ \hline
5000 & \textbf{0.9980} & 0.9439       & 0.0805 & 0.9485  & 0.0808 & {\ul 0.9768} \\ \hline
10000 & \textbf{0.9670} & 0.9010       & 0.0518 & 0.9384  & 0.0511 & {\ul 0.9567} \\ \hline
\end{tabular}
\caption{Coherences for different clustering algorithm and different scale network generated by random hyperbolic disk model. The best result is in bold font. While the second-best is underlined.}
\label{CoherenceVSclustering}
\end{table}

\begin{figure}[htbp]
    \centering
    \subfigure[Infomap]{
        \begin{minipage}[t]{0.3\linewidth}
        \centering
        \includegraphics[width=1.1in]{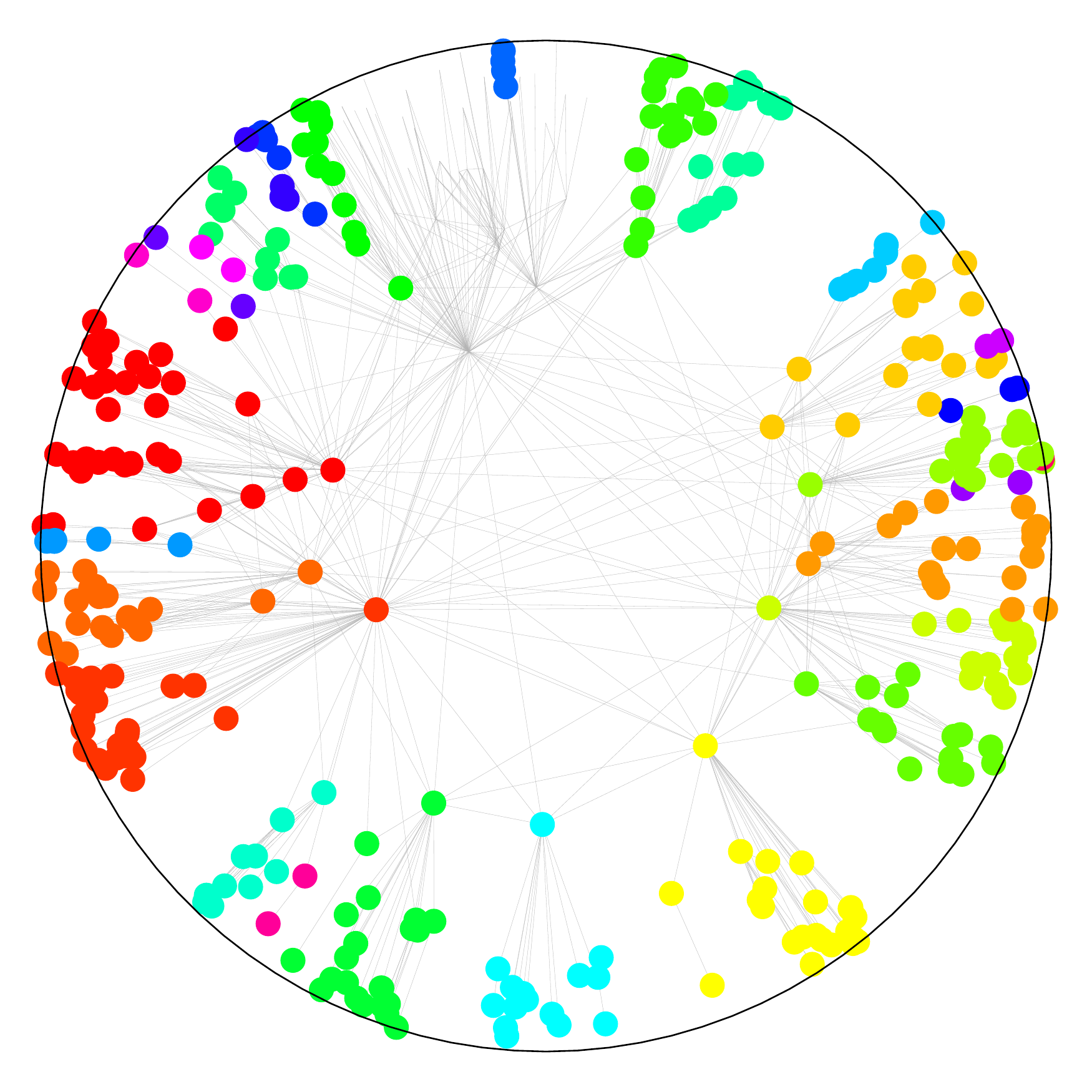}
        \end{minipage}
    }
    \subfigure[Modularity]{
        \begin{minipage}[t]{0.3\linewidth}
        \centering
        \includegraphics[width=1.1in]{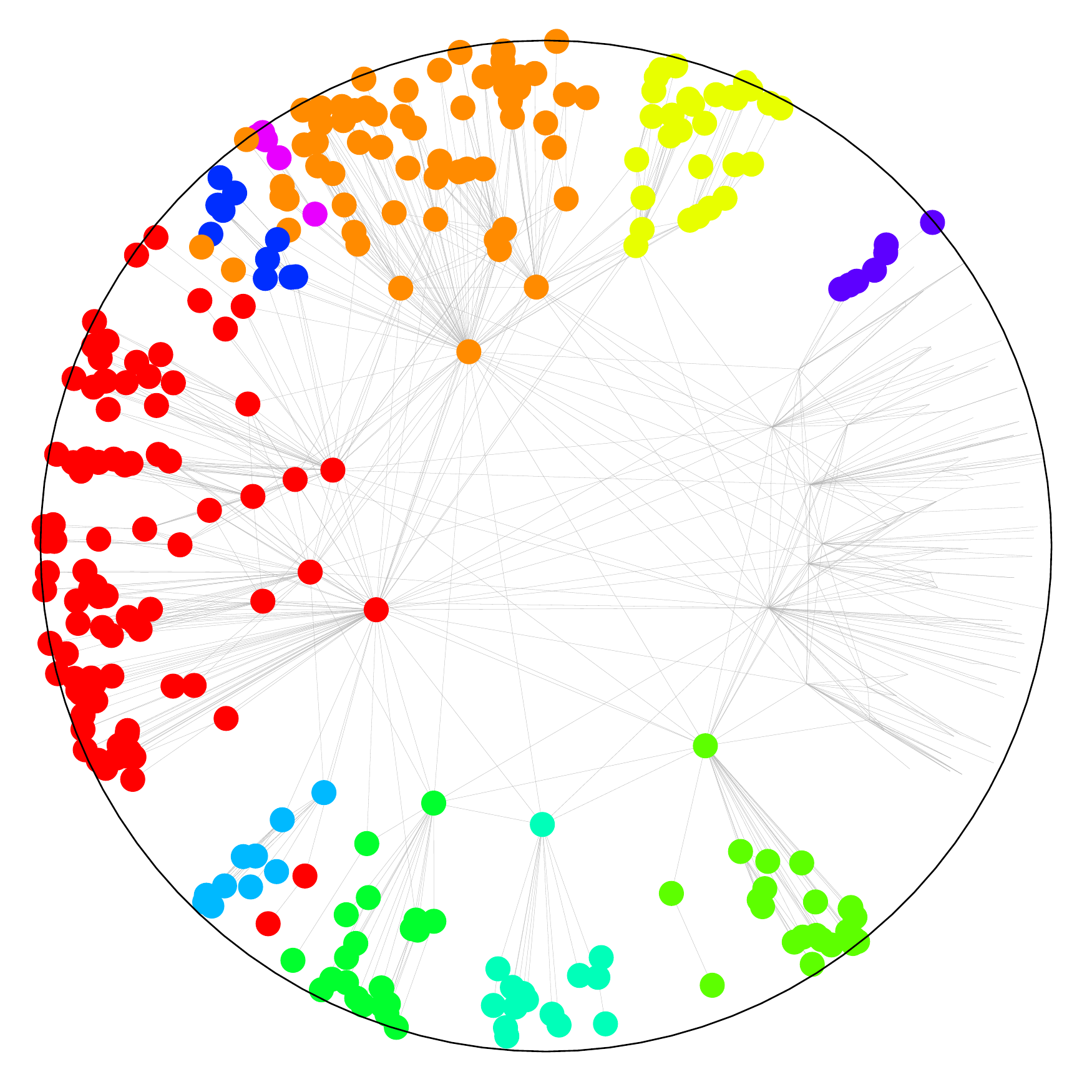}
        \end{minipage}
    }
    \subfigure[SBM-MDL]{
        \begin{minipage}[t]{0.3\linewidth}
        \centering
        \includegraphics[width=1.1in]{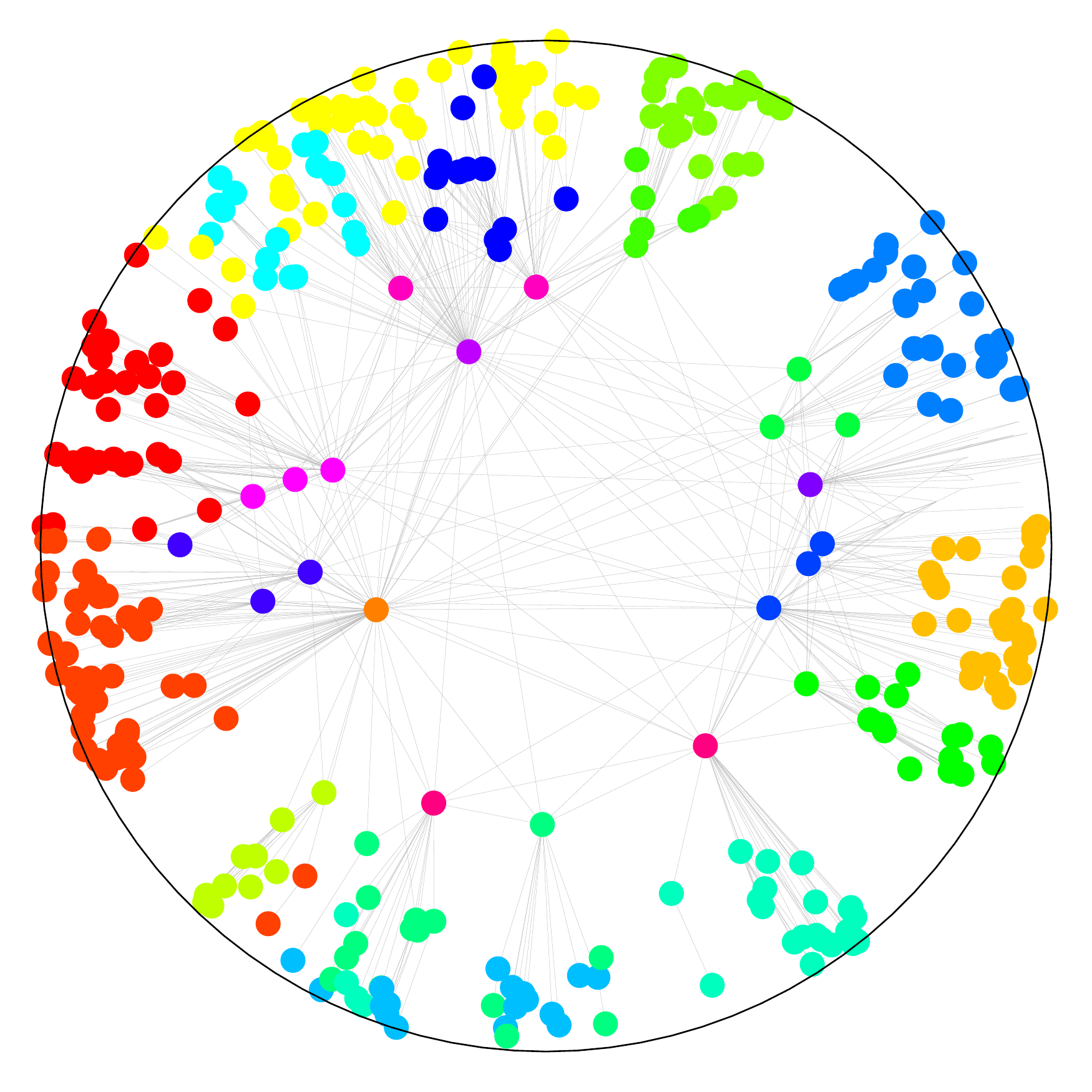}
        \end{minipage}
    }

    \subfigure[DCSBM-MDL]{
        \begin{minipage}[t]{0.3\linewidth}
        \centering
        \includegraphics[width=1.1in]{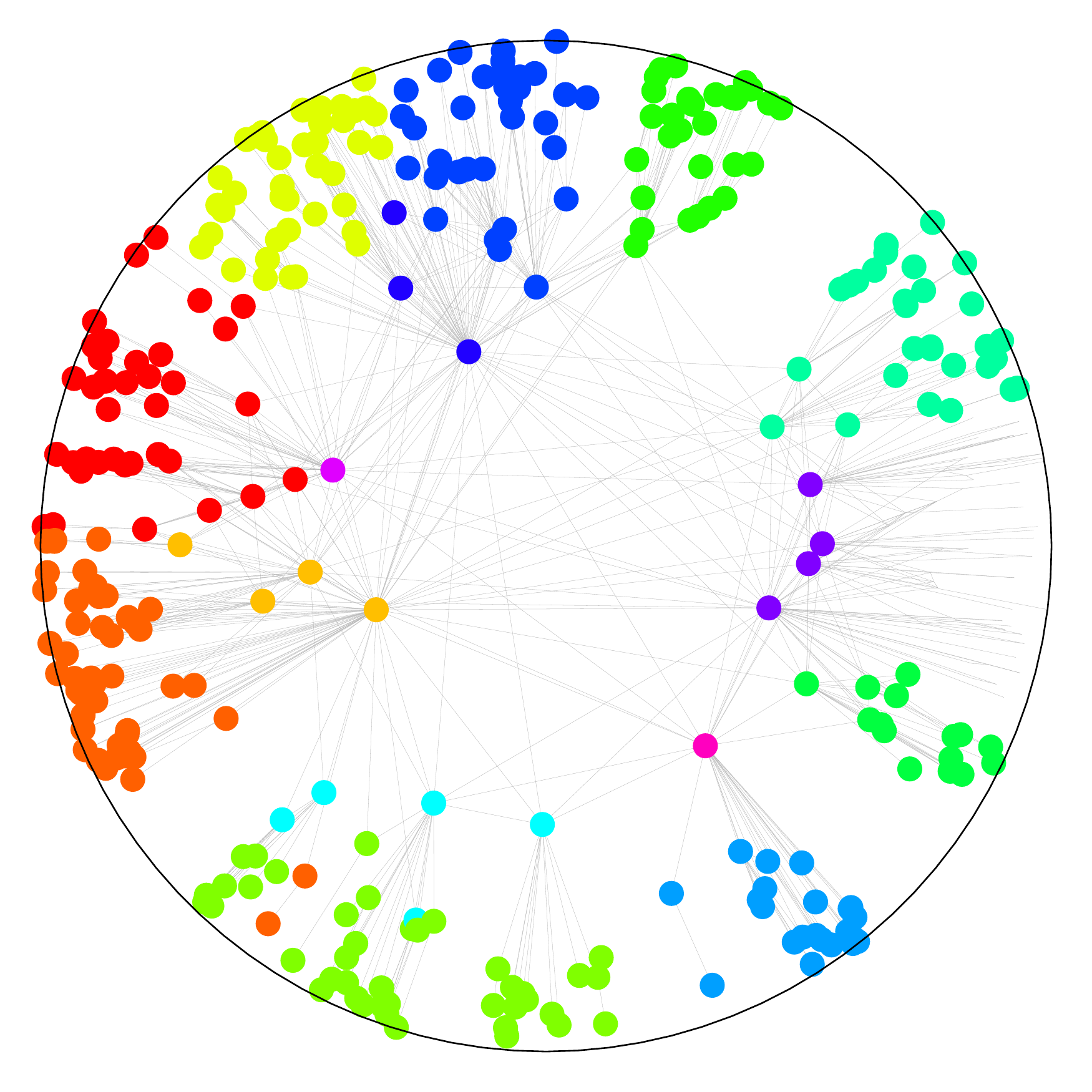}
        \end{minipage}
    }
    \subfigure[SBM]{
        \begin{minipage}[t]{0.3\linewidth}
        \centering
        \includegraphics[width=1.1in]{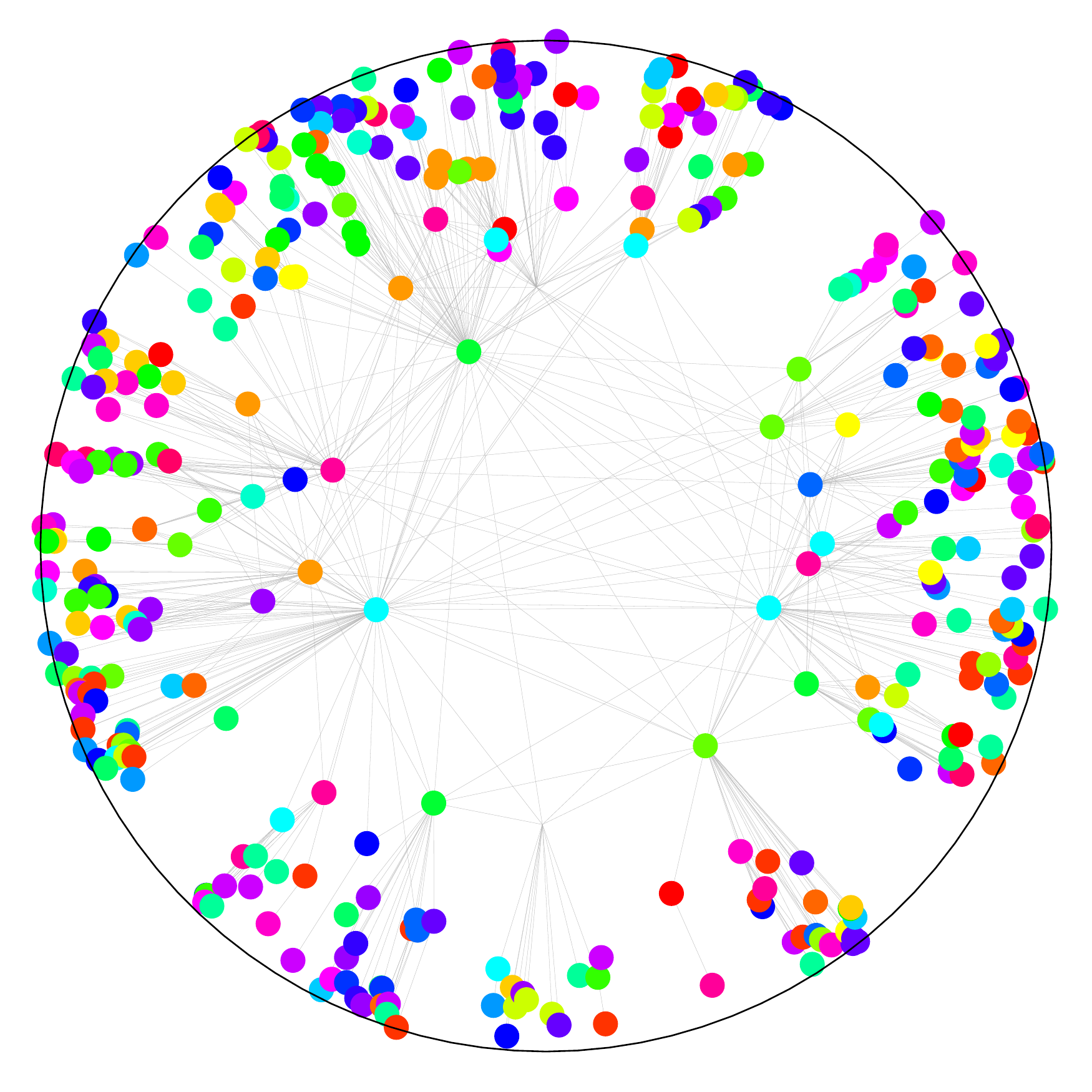}
        \end{minipage}
    }
    \subfigure[DCSBM]{
        \begin{minipage}[t]{0.3\linewidth}
        \centering
        \includegraphics[width=1.1in]{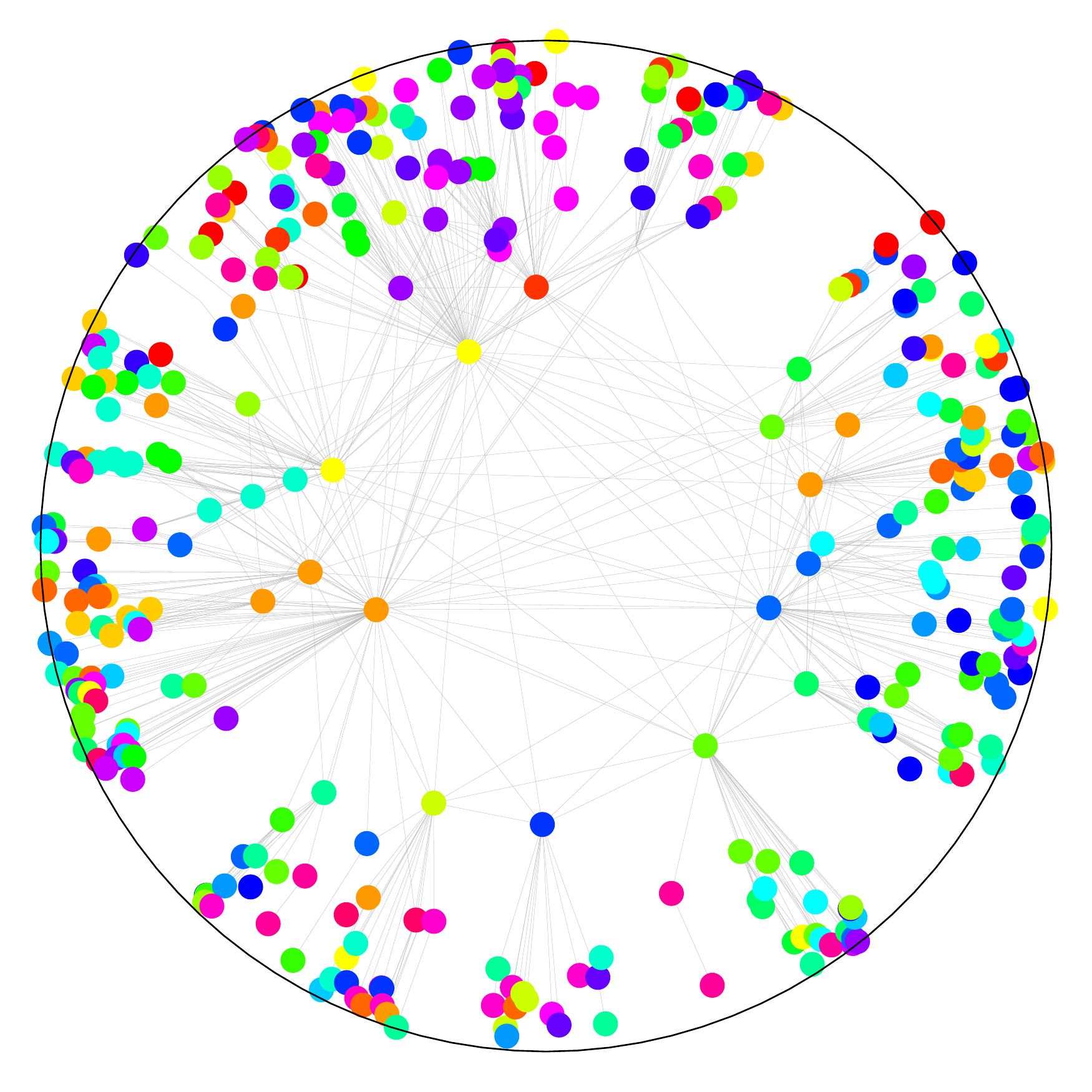}
        \end{minipage}
    }
    \centering
    \caption{Illustration for different clustering algorithm on the graph generated by random hyperbolic disk model. The nodes' color correspond to their module membership. The parameters are $C=2$, $\alpha=0.6$, $N=367$, $T=0.1$. For SBM and DCSBM, we set the block number to 30 for fair comparison which is the best partition number returned by infomap.}
    \label{ClusteringIllustration}
\end{figure}

\section{Multiplex Geometric Correlation}
Previous research has revealed the hidden correlation among radial and angular coordinates in multiplex network\cite{kleineberg2016hidden}: node pair located at small angular distance in one layer tends to be similar in another layer. We here show a new result which reveals a absolute correlation among node's multilayer angular coordinates instead of the relative one.  We denote $d_u^{{\rm max}}$, $d_u^{{\rm min}}$, for the maximal , minimal degree of node $u$ in each layer respectively. $\Delta\theta_u$ denotes the maximal absolute difference of angular coordinates of node $u$. 
Then we count the violation ratio of the following statistic:
\begin{equation}
\begin{split}
    &{\rm CN}(u)>{\rm CN}(v)\\
    &\cap\quad d_u^{{\rm max}} < d_v^{{\rm min}}\\
    &\cap\quad \Delta\theta_u < \Delta\theta_v
\end{split}
\end{equation}

Intuitively, this statistic assumes that if node $u$ has more common neighbors than $v$ while with lower degree then the angular distances among each layer should be narrower. Table~\ref{correlation} reports the violation ratio on multiple multiplex networks from 6 different regions. The result shows that the violation ratio is less than 5\% on all datasets which reveals the rationality of the assumption. It should be noted that our new result is different with\cite{kleineberg2016hidden} which considers the angular difference for node pairs among different layers. While our result reveals that the angular coordinates for a specific node is also similar given that this node has the similar local topology in each layer. 

This empirical finding inspires us to incorporate the absolute angular correlation for nodes in multiplex network, which is to align node's angular coordinates among layers with the similar topology, i.e., community structure.

\begin{table}[]
\begin{tabular}{|c|c|c|c|}
\hline
Dataset & \begin{tabular}[c]{@{}c@{}}$\sharp$ common \\ nodes\end{tabular} & violation ratio & \begin{tabular}[c]{@{}c@{}}violation ratio\\ (without degree)\end{tabular} \\ \hline
Router & 4819 & \begin{tabular}[c]{@{}c@{}}1.74\% \\ (227064/13069968)\end{tabular} & \begin{tabular}[c]{@{}c@{}}18.32\%\\  (4254648/23222761)\end{tabular} \\ \hline
Air Train & 69 & \begin{tabular}[c]{@{}c@{}}0.30\% \\ (8/2702)\end{tabular} & \begin{tabular}[c]{@{}c@{}}17.85\% \\ (850/4761)\end{tabular} \\ \hline
D.M. & 557 & \begin{tabular}[c]{@{}c@{}}3.28\% \\ (6082/185616)\end{tabular} & \begin{tabular}[c]{@{}c@{}}19.92\% \\ (61813/310249)\end{tabular} \\ \hline
C.E. & 238 & \begin{tabular}[c]{@{}c@{}}3.63\% \\ (1056/29124)\end{tabular} & \begin{tabular}[c]{@{}c@{}}12.08\% \\ (6842/56644)\end{tabular} \\ \hline
Brain & 77 & \begin{tabular}[c]{@{}c@{}}4.37\% \\ (122/2790)\end{tabular} & \begin{tabular}[c]{@{}c@{}}20.53\% \\ (1217/5929)\end{tabular} \\ \hline
arXiv & 1514 & \begin{tabular}[c]{@{}c@{}}2.76\% \\ (40748/1475242)\end{tabular} & \begin{tabular}[c]{@{}c@{}}22.20\% \\ (508888/2292196)\end{tabular} \\ \hline
\end{tabular}
\caption{Absolute angular correlation on multiplex networks from 6 different regions. We report for each dataset the violation ratio with and without degree consideration respectively.}
\label{correlation}
\end{table}

\section{Optimization Objectives}

Putting all above three ideas together, the loss function to optimize the hyperbolic embedding of multiplex network is:
\begin{equation}
\mathcal{L}(\Theta, H, R)=O_1(\Theta, R) + O_2(H, \Theta) + O_3(H, \Theta)
\label{total_objective}
\end{equation}

Then our final output is:
\begin{equation}
(\Theta^*, H^*, R^*)={\rm argmin}\mathcal{L}(\Theta, H, R)
\end{equation}

The intuitive idea is to pipeline the community detection and network embedding. However, as revealed by previous researches, these two tasks benefit each other. One intuitive reason is that real networks are always sparse and noisy. While network embedding always incorporates higher order proximity which improves the node representation performance. Then community detection based on network embedding will enhance the performance.

We will first outline our optimization process.
\begin{enumerate}
    \item Based on observed network topology, preliminary latent coordinates will be inferred.
    \item Multiplex community detection will then be performed through multiplex map equation. It should be noted that the random walker is guided by latent hyperbolic coordinates.
    \item Based on current multiplex community detection result, the angular coordinates within one community will be forcing closing.
\end{enumerate}

\section{Optimization}
\subsection{Inference Algorithm}
We optimize the objective function Eq.\ref{total_objective} of HME by running SGD over each component iteratively. HME first initialize the angular coordinates by the infomap community detection result. The radial coordinates are initialized by the analytical expression. Then we perform the RSGD optimization over the initialized hyperbolic coordinates. Augular coherence optimization is further performed to improve the community separation. Given the inferred hyperbolic coordinates, we can perform the infomap community detection to enable the feedback effect. Empirically, the iteration process converges quickly after several rounds.

\setlength{\textfloatsep}{0cm}
\setlength{\floatsep}{0cm}
\begin{algorithm}
    \caption{Inference algorithm for single layer HME}
    \label{algorithm1}
    \SetKwInOut{Input}{Input}
    \SetKwInOut{Output}{Output}
    \Input{
    \textit{$G$}: undirected unweighted graph $G=(V,E)$\\
    }
    \Output{
    \textit{$\Phi$}: node embedding $\Phi\in\mathcal{R}^{N*2}$. Each embedding has two components: radial and angular coordinate.\\
    \textit{$\Pi$}: community assignment.\\
    }
    Initialize $\Pi\leftarrow\textit{Infomap}(G))$\;
    Initialize node's radial coordinate ${r_i}$ by Eq.\ref{r_init}\;
    \While{$iter < T$}{
        initialize node angular coordinates by Eq. ?\;
        \For{\textbf{all} \textit{edges} $(i, j)\in E$}
        {
            RSGD on $\phi_i=(\theta_i, r_i)$ by Eq.\ref{rsgd_theta} and Eq.\ref{rsgd_r}\;
        }
        \For{\textbf{all} \textit{nodes} $i\in V$}
        {
            SGD on angular coordinate coherence by Eq.\ref{sgd_coherence}\;
        }
        Community Detection over poincare disk by Eq.?\;
    }
\end{algorithm}
\setlength{\textfloatsep}{0cm}
\setlength{\floatsep}{0cm}

\subsection{First order proximity}
Following \cite{nickel2017poincare}, we optimize Equation~\ref{O1} by employing the stochastic Riemannian optimization methods such as RSGD\cite{bonnabel2013stochastic}. However, we preform the optimization on the poincare disk model with polar coordinates due to its closely related analogy with the community structure for the embeddings. The general Riemannian optimization framework works as follows:

$$\boldsymbol{\theta}_{t+1}=\mathcal{R}_{\boldsymbol{\theta}_t}(-\eta_t\nabla_R\mathcal{L}(\boldsymbol{\theta}_t))$$

where $\boldsymbol{\theta}_t$ denotes the polar coordinate $(r_t, \rho_t)$ at iteration $t$ and $\mathcal{R}_{\boldsymbol{\theta}_t}$ denotes the retraction onto poincare disk at point $\boldsymbol{\theta}_t$ ($\mathcal{R}_{\boldsymbol{\theta}(v)}=\boldsymbol{\theta}+v$) which is actually a first order approximation of exponential map operation\cite{bonnabel2013stochastic}. $\eta_t$ denotes the learning rate. To derive the Riemannian gradient $\nabla_R$ for poincare disk with polar coordinates, we need to rescale the Euclidean gradient with the inverse of metric tensor matrix which is given below:
\begin{equation}
g_p=\left[
\begin{matrix}
 1 & 0\\
 0 & {\rm sinh^2}r\\
\end{matrix}
\right].
\label{metrictensor}
\end{equation}

The complete update procedure then follows as:
$$\boldsymbol{\theta}_{t+1}={\rm proj}\left(\boldsymbol{\theta}_t-\eta_tg_p^{-1}(\boldsymbol{\theta}_t)\nabla_E\mathcal{L}(\boldsymbol{\theta})\right).$$
where ${\rm proj}$ denotes the projection operation to pull the updated embedding coordinate back into the poincare disk.

The Euclidean gradient $\nabla_E$ is the product of the derivative of first order loss function with respect to poincare distance function and the partial derivative of the distance function, i.e.
$$\nabla_E=\frac{\partial\mathcal{L}}{\partial d(\boldsymbol{u}, \boldsymbol{v})}\frac{\partial d(\boldsymbol{u}, \boldsymbol{v})}{\partial \boldsymbol{u}}$$.

Recall that for node $v$ the local loss function is:
\begin{equation}
\mathcal{L}(v)=\sum\limits_{u\sim v}{\rm logP}(d(\boldsymbol{u},\boldsymbol{v}))+\sum\limits_{u\nsim v}{\rm log}(1-{\rm P}(d(\boldsymbol{u},\boldsymbol{v})))
\end{equation}
where the first term corresponds to the positive edges while the second models the negative edges.

Then the derivative of the local loss function $\mathcal{L}(v)$ with respect to distance function is:
$$\cdot\frac{\partial\mathcal{L}(v)}{\partial d(\boldsymbol{u}, \boldsymbol{v})}=\frac{1}{2T}\left(\sum\limits_{u\sim v}(1-\delta(x(\boldsymbol{u}, \boldsymbol{v})))+\sum\limits_{u\nsim v}(0-\delta(x(\boldsymbol{u},\boldsymbol{v})))\right)$$
where $\delta(\cdot)$ denotes the sigmoid function and $x(\boldsymbol{u},\boldsymbol{v})=\frac{R-d(\boldsymbol{u},\boldsymbol{v})}{2T}$.

To reduce the computation cost of local loss function which requires summation over the entire set of edges, following\cite{tang2015line} we employ the negative sampling approach to sample some fixed number of negative edges. The following function specifies the negative sampling loss
\begin{equation}
2T\cdot\frac{\partial\mathcal{L}(v)}{\partial d(\boldsymbol{u}, \boldsymbol{v})}=\sum\limits_{u\sim v}(1-\delta(x(\boldsymbol{u},\boldsymbol{v})))+\sum\limits_{i=1}^K \mathbb{E}_{u\sim P(v)}\left(0-\delta(x(\boldsymbol{u}, \boldsymbol{v}))\right),
\end{equation}
where $P(v)\propto d_v^{3/4}$ as proposed in previous work\cite{mikolov2013distributed}.

The partial derivative of the poincare distance function with respect to radial and angular coordinate are as following
$$\frac{\partial d(\boldsymbol{u},\boldsymbol{v})}{\partial r_u}=\frac{{\rm sinh}(r_u){\rm cosh}(r_v)-{\rm cosh}(r_u){\rm sinh}(r_v){\rm cos}(\rho_u-\rho_v)}{\sqrt{D^2-1}},$$
$$\frac{\partial d(\boldsymbol{u},\boldsymbol{v})}{\partial \rho_u}=\frac{{\rm sinh}(r_u){\rm sinh}(r_v){\rm sin}(\rho_u-\rho_v)}{\sqrt{D^2-1}}$$
where
$$D={\rm cosh}(r_u){\rm cosh}(r_v)-{\rm sinh}(r_u){\rm sinh}(r_v){\rm cos}(\rho_u-\rho_v).$$

Since the high quality estimate of radial coordinate has been derived in previous work\cite{bogu?á2010sustaining} as the following analytical expression:
\begin{equation}
r_u={\rm min}\left\{R, 2{\rm log}\left(\frac{2N\alpha T}{{\rm deg}(u) {\rm sin}(\pi T)(\alpha-\frac{1}{2})}\right)\right\}.
\label{r_init}
\end{equation}
We will initialize the radial coordinate as the estimated value and update the radial and angular coordinates as the following:
\begin{equation}
r_v^{(t+1)}=r_v^{(t)}-\eta_t\cdot\frac{\partial\mathcal{L}}{\partial d(\boldsymbol{u}, \boldsymbol{v})}\cdot\frac{\partial d(\boldsymbol{u}, \boldsymbol{v})}{\partial r_v}.
\label{rsgd_r}
\end{equation}
\begin{equation}
\rho_v^{(t+1)}=\rho_v^{(t)}-\eta_t\cdot{\rm sinh}^{-2}(r_v)\cdot\frac{\partial\mathcal{L}}{\partial d(\boldsymbol{u}, \boldsymbol{v})}\cdot\frac{\partial d(\boldsymbol{u}, \boldsymbol{v})}{\partial \rho_v}.
\label{rsgd_theta}
\end{equation}

\subsection{Angular coordinate coherence}

We optimize Equation \ref{O3} by stochastic gradient descent. For each node $u$ and current partition we compute the gradient of angular coordinate as following
\begin{equation}
\frac{\partial O_3}{\partial \theta_u} = \frac{(-sin\theta_u)\cdot\sum\limits_{v\in C_m}cos\theta_v+cos\theta_u\cdot\sum\limits_{v\in C_m}sin\theta_v}{\sqrt{\left(\sum\limits_{v\in C_m}cos\theta_v\right)^2+\left(\sum\limits_{v\in C_m}sin\theta_v\right)^2}},
\label{sgd_coherence}
\end{equation}
where
\begin{equation}
\delta_u=m.
\end{equation}

\section{Experiments}

\subsection{Datasets}\label{datasetSection}
The open multiplex network dataset released by Kaj-Kolja Kleineberg is used in our experiments. The dataset consists of different multiplex networks from diverse domains. They were used to reveal the hidden geometric correlation in multiplex systems\cite{kleineberg2016hidden}\cite{kleineberg2017geometric} and characterize the analogy between hyperbolic embedding and community structure\cite{faqeeh2018characterizing}. We consider 6 multiplex network dataset in this paper. The statistics of datasets are listed in Table~\ref{ResolutionLimit}. The details of the dataset are as follows:

\textbf{Routers} is a two layer network of Autonomous Systems (ASs) collected by CAIDA. Each node is a AS. And the link between ASs denotes the traffic exchange. The two layer represents IPv4 and IPv6 packet routing between ASs.

\textbf{C. Elegans Connectomme} is a three layer network. In this dataset, the nodes are neurons. And the three layers correspond to different type of synaptic connection (electric,
chemical monadic, and chemical Polyadic).

\textbf{Drosophila Melanogaster} network represents the interactions between proteins. The two layers correspond to suppressive genetic interaction and additive genetic interaction respectively.

\textbf{arXiv} dataset consists of 6 layer networks. The nodes are authors. The edge indicates a coauthorship between two authors. Each layer correspond to one category.

\textbf{SacchPomb} dataset consists of 2 layer networks and represents the multiplex genetic and protein interaction network of Saccharomyces Pombe. Each layer corresponds to different types of interactions (direct interaction and colocalization).

\textbf{Rattus} dataset represents the multiplex genetic and protein interaction network of the Rattus Norvegicus and consists two layer of networks.

\begin{figure*}[htbp]
    \centering
    \subfigure[iteration=0]{
        \begin{minipage}[t]{0.18\linewidth}
        \centering
        \includegraphics[width=1.2in]{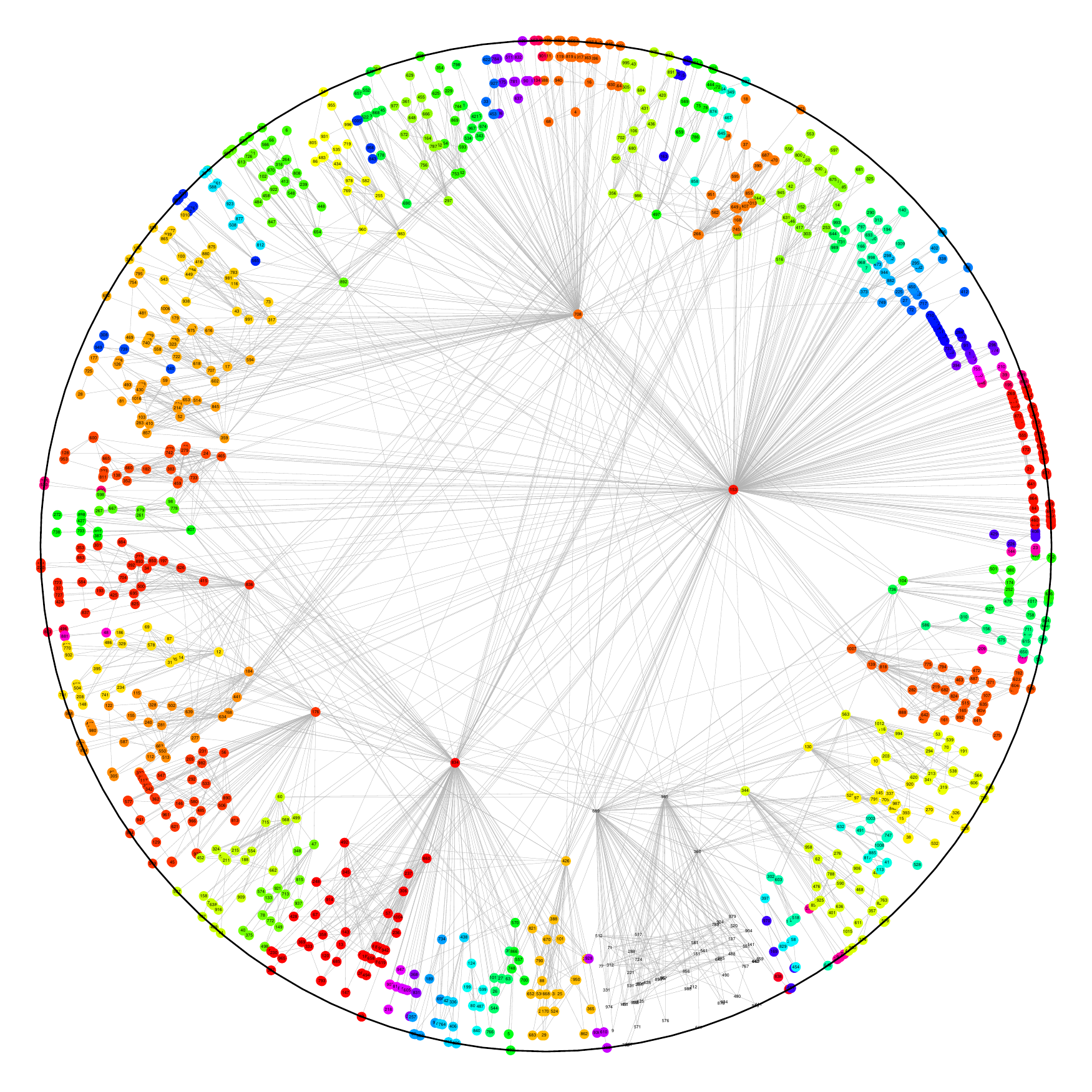}
        \end{minipage}
    }
    \subfigure[iteration=25]{
        \begin{minipage}[t]{0.18\linewidth}
        \centering
        \includegraphics[width=1.2in]{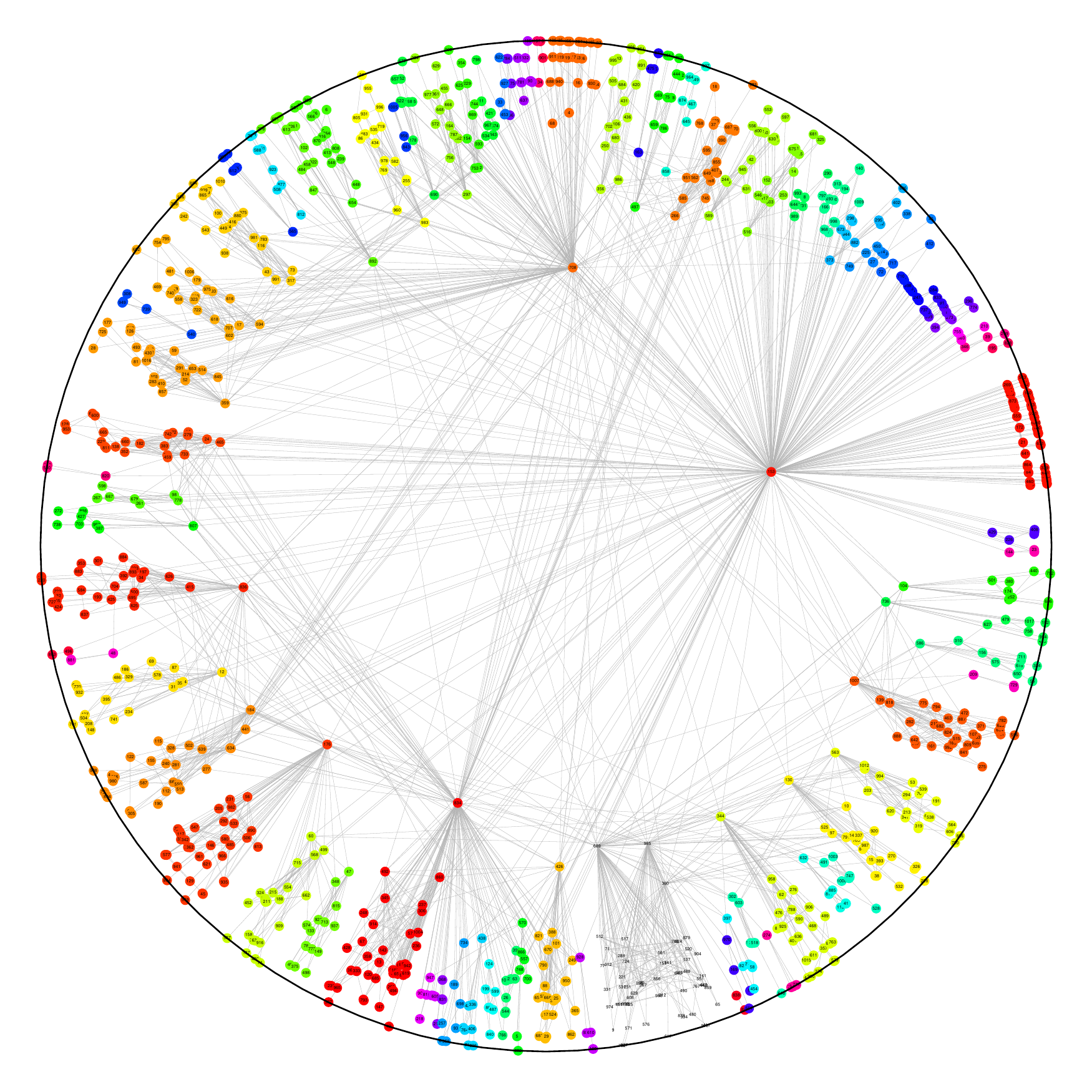}
        \end{minipage}
    }
    \subfigure[iteration=50]{
        \begin{minipage}[t]{0.18\linewidth}
        \centering
        \includegraphics[width=1.2in]{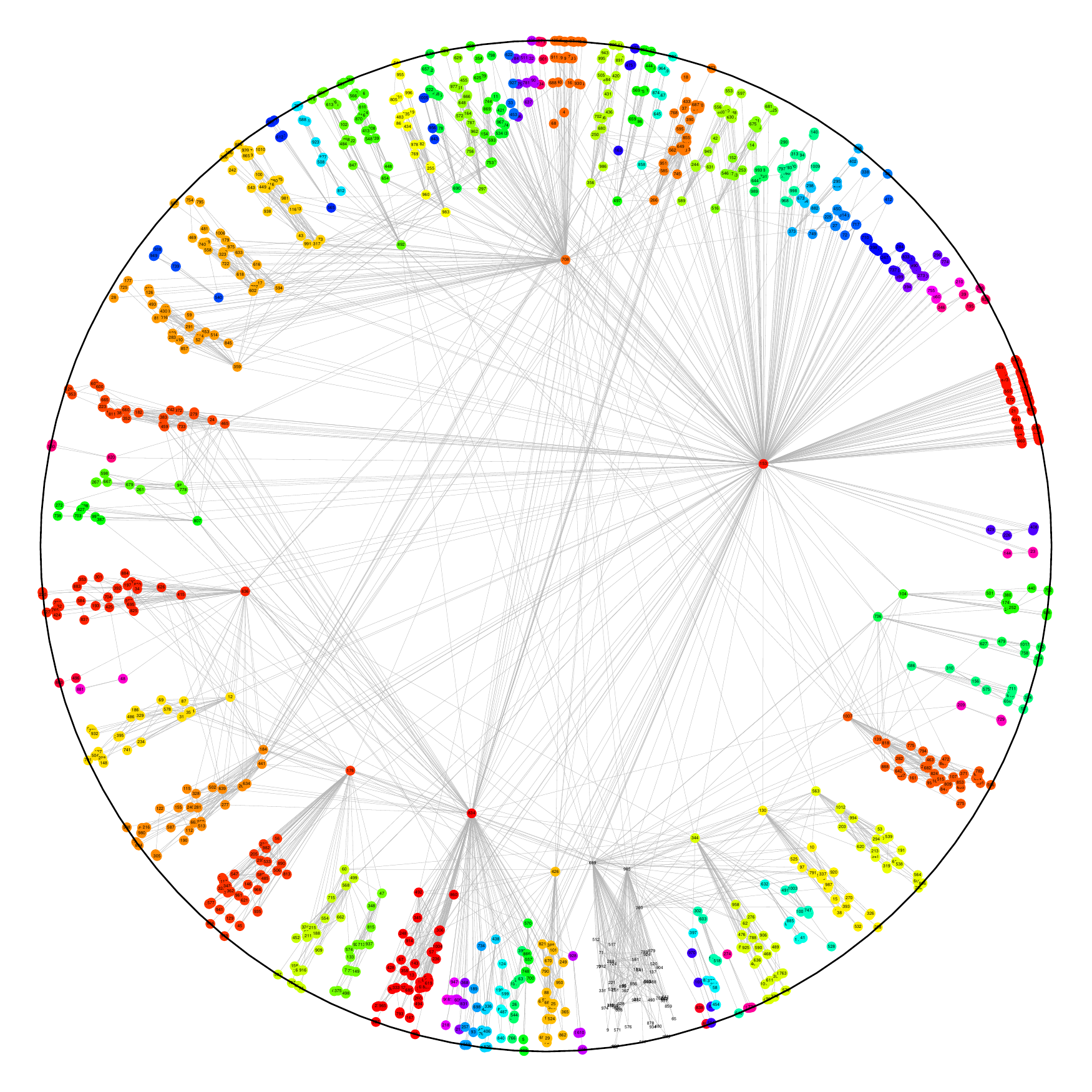}
        \end{minipage}
    }    
    \subfigure[iteration=75]{
        \begin{minipage}[t]{0.18\linewidth}
        \centering
        \includegraphics[width=1.2in]{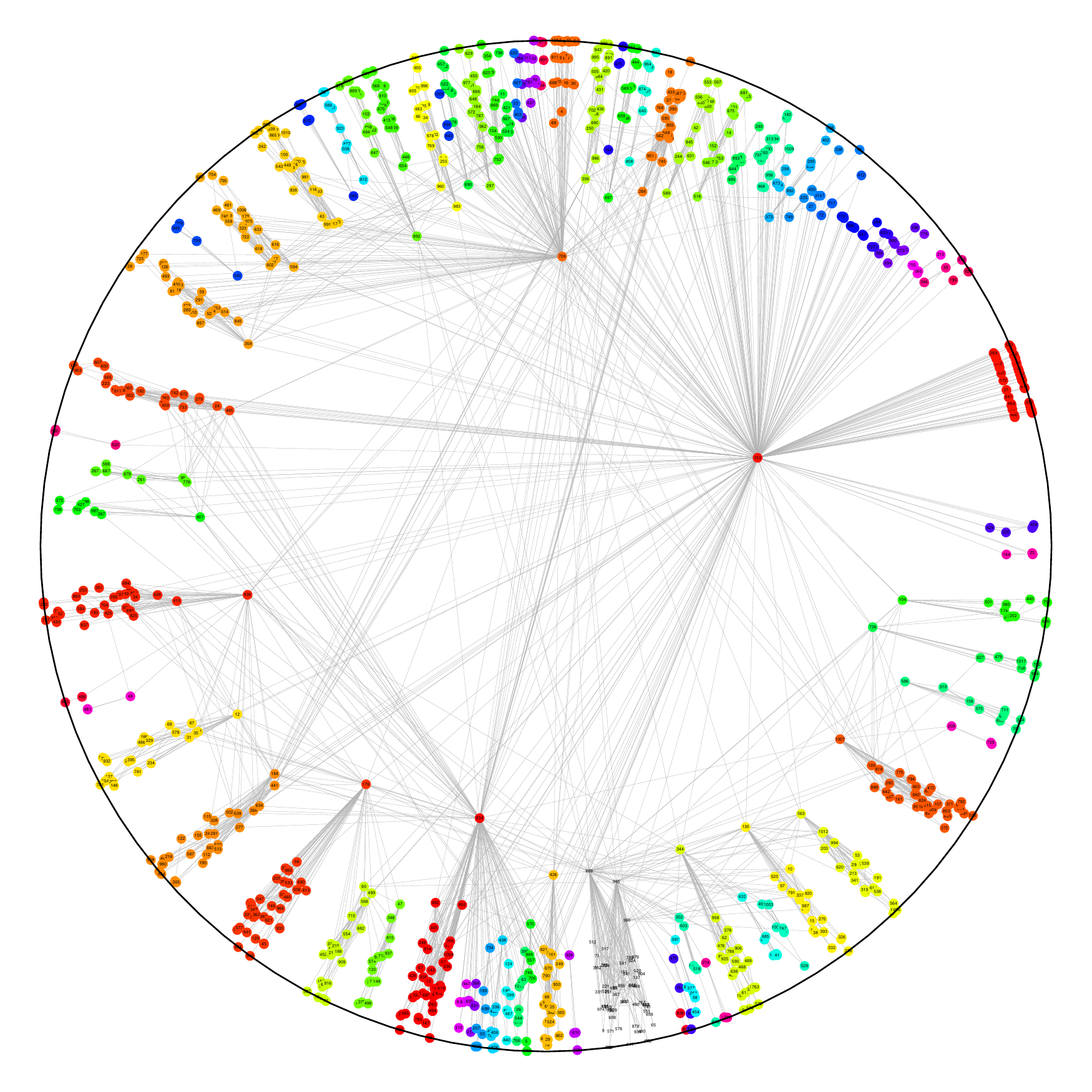}
        \end{minipage}
    }
    \subfigure[iteration=100]{
        \begin{minipage}[t]{0.18\linewidth}
        \centering
        \includegraphics[width=1.2in]{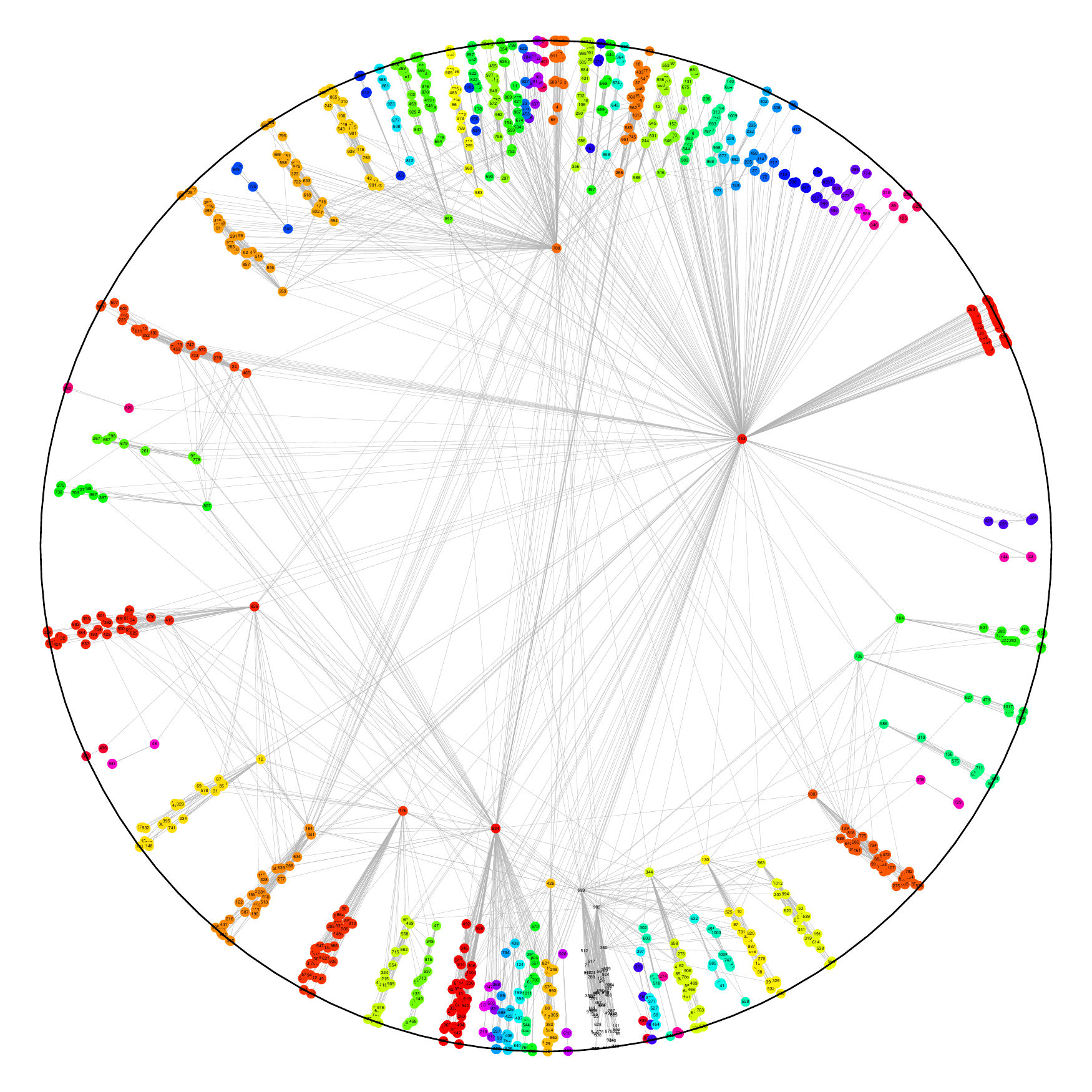}
        \end{minipage}
    }    
    \caption{poincare disk embedding during the 100 iteration training process}
    \label{training}
\end{figure*}

\subsection{Baselines}

We consider the following algorithms as our baselines. For comparison and clearance, we list and classify the baseline algorithms in Table~\ref{BaselineClassification}.

\begin{table}[]
\small
\begin{tabular}{|c|c|c|c|c|c|}
\hline
                                                                & \begin{tabular}[c]{@{}c@{}}node \\ embed\end{tabular} & \begin{tabular}[c]{@{}c@{}}com. \\ embed\end{tabular} & \begin{tabular}[c]{@{}c@{}}com.\\  detection\end{tabular} & \begin{tabular}[c]{@{}c@{}}embed \\ space\end{tabular} & \begin{tabular}[c]{@{}c@{}}network \\ layer\end{tabular} \\ \hline
LINE                      & $\checkmark$                                             &                                                       &                                                           & E                                                      & S                                                        \\ \hline
ComE             & $\checkmark$                                             & $\checkmark$                                             & $\checkmark$                                                 & E                                                      & S                                                        \\ \hline
MNE                  & $\checkmark$                                             &                                                       &                                                           & E                                                      & M                                                        \\ \hline
Infomap                &                                                       &                                                       & $\checkmark$                                                 & -                                                      & S                                                        \\ \hline
HyperMap & $\checkmark$                                             &                                                       &                                                           & H                                                      & S                                                        \\ \hline
GMM               &                                                       &                                                       & $\checkmark$                                                 & H                                                      & M                                                        \\ \hline
CHM                 & $\checkmark$                                             &                                                       &                                                           & H                                                      & S                                                        \\ \hline
SEAM                      &                                                       &                                                       & $\checkmark$                                                 & H                                                      & S                                                        \\ \hline
SpringEmb    & $\checkmark$                                             &                                                       &                                                           & H                                                      & S                                                        \\ \hline
Poincare            & $\checkmark$                                             &                                                       &                                                           & H                                                      & S                                                        \\ \hline
Coalescent        & $\checkmark$                                             &                                                       & $\checkmark$                                                          & H                                                      & S                                                        \\ \hline
\end{tabular}
\caption{Details and classification of baseline algorithms. 'E' and 'H' in column embed space stand for Euclidean and Hyperbolic space respectively. 'S' and 'M' in column network layer stand for single layer and multiplex respectively.}
\label{BaselineClassification}
\end{table}

\textbf{LINE\cite{tang2015line}.} LINE is the one of the most influential euclidean embedding algorithms. This algorithm assumes each node is with a euclidean vector. And two node connect with each other in the probability of inner product of their embeddings. LINE employs the first order and second order topology proximity to estimate the latent euclidean embedding.

\textbf{ComE\cite{cavallari2017learning}.} ComE proposes an algorithm jointly learning the community embedding and node embedding. ComE assumes the node embeddings within one community follow the Gaussian Mixture Model. Then besides the first order and second order proximity, the community coherence is also introduced to enfore the nodes embedding close to each other in one community.

\textbf{MNE\cite{zhang2018scalable}.} MNE is an embedding model for multiplex network. MNE assumes each node's embedding is an linear combination of two parts. The common vector part is shared across different layers. While layer specific vector is distinct for each type of relationship.

\textbf{Hypermap\cite{papadopoulos2015networkcommon}.} Hypermap is an hyperbolic embedding algorithm for single layer network. Hypermap works by optimizing the likelihood of the combination of first and second order link probability.

\textbf{GMM\cite{kleineberg2016hidden}.} Geometric Multiplex Model (GMM) is proposed to model the multiplex hyperbolic network model. This work does not directly compute the node's embedding. While a straightforward multiplex community detection algorithm is proposed in this work. We include this method as a baseline to evaluate our algorithm's multiplex community detection performance.

\textbf{CHM\cite{wang2016hyperbolic}.} CHM is the first work to show that community structure can be used to improve the accuracy and efficiency of hyperbolic embedding. CHM is a single layer oriented hyperbolic embedding algorithm.

\textbf{SEAM\cite{wang2016fast}.} SEAM is a community detection algorithm which employs the node's latent hyperbolic embeddings. We introduce this algorithm as one baseline method to evaluate our algorithm's community detection performance.

\textbf{PoincareEmbedding\cite{nickel2017poincare}.} $Poincar\acute{e}$Embedding is an efficient algorithm to learn the $poincar\acute{e}$ embedding of nodes based on Riemannian optimization. This algorithm is based on the $Poincar\acute{e}$ ball model as it is well suited for gradient optimization.

\textbf{SpringEmbedder\cite{blasius2018efficient}.} This algorithm improves the Hypermap by deriving an analytic expression which relates the common neighbor number and their angular difference.

\textbf{Infomap\cite{rosvall2008maps}.} Infomap is a nonparametric community detection algorithm. We include this method to evaluate our method's community detection performance.

\textbf{Coalescent embedding\cite{muscoloni2017machine}.} It adopts topology based machine learning for nonlinear dimension reduction to approximate the node angular organizations in the hyperbolic disk.

\subsection{Evaluation on Synthetic Networks}
To evaluate the accuracy of HME embedding algorithm, we first generate synthetic networks according to the hyperbolic random graph model. Specifically we employ the hyperbolic random graph sampling algorithm\footnote{https://hpi.de/friedrich/research/hyperbolic} proposed in \cite{BringmannKL19Geometric} which runs in expected linear time. The coordinates can then be considered as the ground truth embedding. However even for the synthetic networks, several different coordinates up to rotations and translations will be all equivalent. Then following \cite{muscoloni2017machine}, we employ the Pearson correlation between all the pairwise hyperbolic distances of nodes (denoted as HD-correlation) to evaluate the accuracy of embedding performance. We generate four different scale networks by combining varying hyperbolic random graph model parameters. For each parameter combination, we generate different synthetic random networks and compute the HD-correlation by average. The results are report in Table \ref{HDcorrelation}.

\definecolor{mygray}{gray}{.8}

\begin{table*}[]
\normalsize
\begin{threeparttable}          
\begin{tabular}{|llrrrrlrrrrlrrrr|}
\hline
                                             &                                 & \multicolumn{4}{c}{C=2}                                                                   &                      & \multicolumn{4}{c}{C=0}                                                                   &                      & \multicolumn{4}{c|}{C=-2}                                                                  \\ \cline{1-6} \cline{8-11} \cline{13-16}
\multicolumn{1}{|l|}{Size}                      & Method                          & T=0.1                & T=0.3                & T=0.6                & T=0.9                & \multicolumn{1}{r}{} & T=0.1                & T=0.3                & T=0.6                & T=0.9                & \multicolumn{1}{r}{} & T=0.1                & T=0.3                & T=0.6                & T=0.9                 \\ \cline{1-6} \cline{8-11} \cline{13-16}
\multicolumn{1}{|l|}{\multirow{9}{*}{N=100}}   & \multicolumn{1}{l|}{Degree}     & 3.13                 & 3.27                 & 3.78                 & 3.57                 &                      & 4.50                 & 4.45                 & 4.15                 & 8.22                 &                      & 9.15                 & 9.91                 & 10.35                & 10.55                 \\ \cline{2-16}
\multicolumn{1}{|l|}{}                       & \multicolumn{1}{l|}{Clustering} & 0.40                 & 0.37                 & 0.49                 & 0.29                 &                      & 0.64                 & 0.61                 & 0.38                 & 0.47                 &                      & 0.76                 & 0.65                 & 0.46                 & 0.47                  \\ \cline{2-16}
\multicolumn{1}{|l|}{}                       & \multicolumn{1}{l|}{Embedder}   & 0.83                 & \cellcolor{mygray}0.74        & 0.68                 & 0.64                 &                      & 0.72                 & 0.83                 & 0.79                 & 0.85                 &                      & 0.86                 & 0.92                 & 0.84                 & 0.83                  \\
\multicolumn{1}{|l|}{}                       & \multicolumn{1}{l|}{Spring}     & \cellcolor{mygray}{0.86}        & 0.73                 & 0.64                 & 0.70                 &                      & \cellcolor{mygray}{0.77}        & \cellcolor{mygray}{0.89}        & 0.75                 & \cellcolor{mygray}{0.82}        &                      & \cellcolor{mygray}{0.94}        & \cellcolor{mygray}{0.94}        & \cellcolor{mygray}{0.85}        & 0.84                  \\
\multicolumn{1}{|l|}{}                       & \multicolumn{1}{l|}{HyperMapCN} & 0.07                 & 0.06                 & \cellcolor{mygray}{0.73}        & 0.60                 &                      & 0.20                 & 0.35                 & 0.26                 & 0.42                 &                      & 0.50                 & 0.67                 & 0.64                 & 0.52                  \\
\multicolumn{1}{|l|}{}                       & \multicolumn{1}{l|}{Coalescent} & 0.63                 & 0.66                 & 0.69                 & \cellcolor{mygray}{0.72}        &                      & 0.67                 & 0.85                 & \cellcolor{mygray}{0.84}        & 0.82                 &                      & 0.90                 & 0.91                 & 0.83                 & \cellcolor{mygray}{0.86}         \\
\multicolumn{1}{|l|}{}                       & \multicolumn{1}{l|}{Poincare}   & 0.55                 & 0.51                 & 0.22                 & 0.43                 & \multicolumn{1}{r}{} & 0.48                 & 0.39                 & 0.35                 & 0.39                 & \multicolumn{1}{r}{} & 0.37                 & 0.36                 & 0.49                 & 0.56                  \\
\multicolumn{1}{|l|}{}                       & \multicolumn{1}{l|}{CHM}        & 0.71                 & 0.70                     & 0.63                 & 0.63                 & \multicolumn{1}{r}{} & 0.62                 & 0.62                 & 0.67                 & 0.73                 & \multicolumn{1}{r}{} & 0.74                 & 0.81                 & 0.79                 & 0.76                  \\
\multicolumn{1}{|l|}{}                       & \multicolumn{1}{l|}{HME}        & \textbf{0.87} & \textbf{0.86} & \textbf{0.75} & \textbf{0.73} &                      & \textbf{0.78} & \textbf{0.91} & \textbf{0.92} & \textbf{0.93} &                      & 0.93 & \textbf{0.94} & \textbf{0.86} & 0.83 \\ \hline
\multicolumn{1}{|l|}{\multirow{9}{*}{N=500}}   & \multicolumn{1}{l|}{Degree}     & 3.15                 & 3.18                 & 3.97                 & 4.39                 &                      & 5.19                 & 5.89                 & 5.25                 & 7.29                 &                      & 11.41                & 12.10                & 14.93                & 17.09                 \\ \cline{2-6} \cline{8-11} \cline{13-16}
\multicolumn{1}{|l|}{}                       & \multicolumn{1}{l|}{Clustering} & 0.51                 & 0.48                 & 0.43                 & 0.30                 &                      & 0.63                 & 0.61                 & 0.46                 & 0.42                 &                      & 0.77                 & 0.68                 & 0.53                 & 0.48                  \\ \cline{2-6} \cline{8-11} \cline{13-16}
\multicolumn{1}{|l|}{}                       & \multicolumn{1}{l|}{Embedder}   & 0.36                 & 0.37                 & 0.57                 & \cellcolor{mygray}{0.71}        &                      & 0.81                 & 0.78                 & 0.82                 & 0.82                 &                      & 0.86                 & 0.93                 & 0.92                 & 0.94                  \\
\multicolumn{1}{|l|}{}                       & \multicolumn{1}{l|}{Spring}     & \cellcolor{mygray}{0.43}        & \cellcolor{mygray}{0.44}        & \cellcolor{mygray}{0.64}        & 0.68                 &                      & \cellcolor{mygray}{0.86}        & \cellcolor{mygray}{0.85}        & 0.78                 & 0.87                 &                      & 0.88                 & 0.96                 & 0.95                 & 0.93                  \\
\multicolumn{1}{|l|}{}                       & \multicolumn{1}{l|}{HyperMapCN} & 0.34                 & 0.33                 & 0.03                 & 0.58                 &                      & 0.73                 & 0.74                 & 0.77                 & 0.73                 &                      & 0.81                 & 0.88                 & 0.92                 & 0.80                  \\
\multicolumn{1}{|l|}{}                       & \multicolumn{1}{l|}{Coalescent} & 0.31                 & 0.31                 & 0.43                 & 0.68                 &                      & 0.79                 & 0.72                 & \cellcolor{mygray}{0.84}        & \cellcolor{mygray}{0.91}        &                      & \cellcolor{mygray}{0.94}        & \cellcolor{mygray}{0.95}        & \cellcolor{mygray}{0.95}        & \cellcolor{mygray}{0.95}         \\
\multicolumn{1}{|l|}{}                       & \multicolumn{1}{l|}{Poincare}   & 0.01                 & 0.09                 & 0.34                 & 0.12                 &                      & 0.18                 & 0.08                 & 0.18                 & 0.20                 &                      & 0.17                 & 0.20                 & 0.18                 & 0.25                  \\
\multicolumn{1}{|l|}{}                       & \multicolumn{1}{l|}{CHM}        & 0.38                 & 0.40                 & 0.50                 & 0.58                 & \multicolumn{1}{r}{} & 0.56                 & 0.57                 & 0.60                 & 0.63                 & \multicolumn{1}{r}{} & 0.61                 & 0.66                 & 0.74                 & 0.73                  \\
\multicolumn{1}{|l|}{}                       & \multicolumn{1}{l|}{HME}        & \textbf{0.46} & \textbf{0.47} & \textbf{0.65} & 0.70 &                      & \textbf{0.86} & \textbf{0.87} & \textbf{0.87} & 0.89 &                      & 0.93 & 0.94 & 0.92 & 0.90 \\ \hline
\multicolumn{1}{|l|}{\multirow{9}{*}{N=1000}}  & \multicolumn{1}{l|}{Degree}     & 3.66                 & 3.69                 & 3.73                 & 4.29                 &                      & 5.54                 & 4.98                 & 5.74                 & 7.06                 &                      & 13.67                & 11.94                & 14.85                & 15.38                 \\ \cline{2-6} \cline{8-11} \cline{13-16}
\multicolumn{1}{|l|}{}                       & \multicolumn{1}{l|}{Clustering} & 0.58                 & 0.52                 & 0.47                 & 0.34                 &                      & 0.68                 & 0.59                 & 0.47                 & 0.42                 &                      & 0.79                 & 0.66                 & 0.52                 & 0.45                  \\ \cline{2-6} \cline{8-11} \cline{13-16}
\multicolumn{1}{|l|}{}                       & \multicolumn{1}{l|}{Embedder}   & 0.40                 & 0.45                 & 0.43                 & \cellcolor{mygray}{0.81}        &                      & \cellcolor{mygray}{0.87}        & 0.67                 & 0.78                 & \cellcolor{mygray}{0.86}        &                      & 0.94                 & 0.88                 & 0.95                 & 0.94                  \\
\multicolumn{1}{|l|}{}                       & \multicolumn{1}{l|}{Spring}     & \cellcolor{mygray}{0.45}        & \cellcolor{mygray}{0.49}        & \cellcolor{mygray}{0.51}        & 0.75                 &                      & 0.78                 & 0.72                 & 0.72                 & 0.74                 &                      & 0.98                 & 0.88                 & 0.96                 & 0.94                  \\
\multicolumn{1}{|l|}{}                       & \multicolumn{1}{l|}{HyperMapCN} & 0.39                 & 0.44                 & 0.00                 & 0.62                 &                      & 0.67                 & 0.62                 & 0.73                 & 0.74                 &                      & 0.81                 & 0.76                 & 0.84                 & 0.77                  \\
\multicolumn{1}{|l|}{}                       & \multicolumn{1}{l|}{Coalescent} & 0.37                 & 0.40                 & 0.36                 & 0.73                 &                      & 0.79                 & \cellcolor{mygray}{0.75}        & \cellcolor{mygray}{0.81}        & 0.75                 &                      & \cellcolor{mygray}{0.95}        & \cellcolor{mygray}{0.95}        & \cellcolor{mygray}{0.96}        & \cellcolor{mygray}{0.95}         \\
\multicolumn{1}{|l|}{}                       & \multicolumn{1}{l|}{Poincare}   & -0.02                & 0.00                 & 0.22                 & 0.20                 &                      & 0.15                 & 0.14                 & 0.14                 & 0.17                 &                      & 0.17                 & 0.20                 & 0.24                 & 0.20                  \\
\multicolumn{1}{|l|}{}                       & \multicolumn{1}{l|}{CHM}        & 0.39                 & 0.42                 & 0.42                 & 0.54                 & \multicolumn{1}{r}{} & 0.55                 & 0.50                 & 0.55                 & 0.61                 & \multicolumn{1}{r}{} & 0.59                 & 0.58                 & 0.70                 & 0.70                  \\
\multicolumn{1}{|l|}{}                       & \multicolumn{1}{l|}{HME}        & \textbf{0.50} & \textbf{0.52} & \textbf{0.52} & \textbf{0.83} &                      & 0.83 & \textbf{0.80} & \textbf{0.82} & \textbf{0.87} &                      & \textbf{0.95} & 0.93 & \textbf{0.96} & \textbf{0.95} \\ \hline
\multicolumn{1}{|l|}{\multirow{9}{*}{N=10000}} & \multicolumn{1}{l|}{Degree}     & 5.22                 & 4.98                 & 5.43                 & 4.68                 &                      & 6.09                 & 7.01                 & 6.60                 & 7.78                 &                      & 15.47                & 13.92                & 16.29                & 20.12                 \\ \cline{2-6} \cline{8-11} \cline{13-16}
\multicolumn{1}{|l|}{}                       & \multicolumn{1}{l|}{Clustering} & 0.86                 & 0.74                 & 0.65                 & 0.42                 &                      & 0.76                 & 0.69                 & 0.47                 & 0.42                 &                      & 0.83                 & 0.68                 & 0.51                 & 0.45                  \\ \cline{2-6} \cline{8-11} \cline{13-16}
\multicolumn{1}{|l|}{}                       & \multicolumn{1}{l|}{Embedder}   & \cellcolor{mygray}{0.33}        & \cellcolor{mygray}{0.36}        & \cellcolor{mygray}{0.47}        & \cellcolor{mygray}{0.55}        &                      & \cellcolor{mygray}{0.55}        & \cellcolor{mygray}{0.68}        & \cellcolor{mygray}{0.90}        & \cellcolor{mygray}{0.88}        &                      & \cellcolor{mygray}{0.85}        & 0.95                 & 0.96                 & 0.96                  \\
\multicolumn{1}{|l|}{}                       & \multicolumn{1}{l|}{Spring}     & 0.12                 & 0.14                 & 0.23                 & 0.30                 &                      & 0.24                 & 0.41                 & 0.61                 & 0.63                 &                      & 0.62                 & 0.72                 & 0.81                 & 0.82                  \\
\multicolumn{1}{|l|}{}                       & \multicolumn{1}{l|}{HyperMapCN} & 0.30 & 0.32 &0.40 & 0.44 &                      & 0.52 & 0.61 & 0.75 & 0.74 &                      & —\tnote{*} & —\tnote{*} & —\tnote{*} & —\tnote{*} \\
\multicolumn{1}{|l|}{}                       & \multicolumn{1}{l|}{Coalescent} & 0.32                 & 0.32                 & 0.35                 & 0.45                 &                      & 0.49                 & 0.40                 & 0.67                 & 0.76                 &                      & 0.66                 & \cellcolor{mygray}{0.96}        & \cellcolor{mygray}{0.97}        & \cellcolor{mygray}{0.96}         \\
\multicolumn{1}{|l|}{}                       & \multicolumn{1}{l|}{Poincare}   & 0.24                 & 0.22                 & 0.28                 & 0.37                 &                      & 0.29                 & 0.43                 & 0.53                 & 0.53                 &                      & 0.48                 & 0.49                 & 0.48                 & 0.45                  \\
\multicolumn{1}{|l|}{}                       & \multicolumn{1}{l|}{CHM}        & 0.38 & 0.38 & 0.39 & 0.42 &                      & 0.43 & 0.46 & 0.49 & 0.51 &                      & 0.51 & 0.55 & 0.72 & 0.71 \\
\multicolumn{1}{|l|}{}                       & \multicolumn{1}{l|}{HME}        & \textbf{0.40} & \textbf{0.40} & \textbf{0.49} & \textbf{0.57} &                      & \textbf{0.61} & 0.67 & 0.87 & \textbf{0.90} &                      & 0.71 & 0.86 & 0.87 & 0.87 \\ \hline
\end{tabular}
  \begin{tablenotes}
    \footnotesize
    \item[*] HyperMapCN did not finish running within 45 days for $C=-2$.
  \end{tablenotes}
\caption{HD-correlation on synthetic networks. We generate different synthetic networks by combing varying parameters of the hyperbolic random graph wherein T controls the clustering coefficient of networks and C tunes node's average degree. For each method, we calculate the Pearson correlation between the original hyperbolic random graph and the inferred embedding. It should be noted that among all the baseline methods, poincare disk model is considered as the latent hyperbolic geometry model while the Poincare Ball Model is employed in Poincare method.}
\label{HDcorrelation}
\end{threeparttable}
\end{table*}

\subsection{Evaluation on Real World Networks}

\begin{table*}[]\small
\begin{tabular}{|c|c|c|c|c|c|c|c|c|c|c|c|c|c|}
\hline
 & \multicolumn{3}{c|}{Routers} & \multicolumn{4}{c|}{C.Elegans} & \multicolumn{3}{c|}{Dros .Mel.} & \multicolumn{3}{c|}{SacchPomb} \\ \cline{2-14} 
\multirow{-2}{*}{Model} & IPv4 & IPv6 & Avg & 1 & 2 & 3 & Avg & 1 & 2 & Avg & 1 & 2 & Avg \\ \hline
\multicolumn{1}{|l|}{Common Neighbor} & \multicolumn{1}{l|}{0.892} & \multicolumn{1}{l|}{0.885} & \multicolumn{1}{l|}{0.889} & \multicolumn{1}{l|}{0.714} & \multicolumn{1}{l|}{0.741} & \multicolumn{1}{l|}{0.820} & \multicolumn{1}{l|}{0.758} & \multicolumn{1}{l|}{0.801} & \multicolumn{1}{l|}{0.783} & \multicolumn{1}{l|}{0.792} & \multicolumn{1}{l|}{0.660} & \multicolumn{1}{l|}{0.763} & \multicolumn{1}{l|}{0.712} \\ \hline
\multicolumn{1}{|l|}{Jaccard Coefficient} & \multicolumn{1}{l|}{0.876} & \multicolumn{1}{l|}{0.830} & \multicolumn{1}{l|}{0.853} & \multicolumn{1}{l|}{0.702} & \multicolumn{1}{l|}{0.722} & \multicolumn{1}{l|}{0.797} & \multicolumn{1}{l|}{0.740} & \multicolumn{1}{l|}{0.790} & \multicolumn{1}{l|}{0.776} & \multicolumn{1}{l|}{0.783} & \multicolumn{1}{l|}{0.660} & \multicolumn{1}{l|}{0.757} & \multicolumn{1}{l|}{0.708} \\ \hline
\multicolumn{1}{|l|}{Adamic/Adar} & \multicolumn{1}{l|}{0.895} & \multicolumn{1}{l|}{0.891} & \multicolumn{1}{l|}{0.893} & \multicolumn{1}{l|}{0.714} & \multicolumn{1}{l|}{0.744} & \multicolumn{1}{l|}{0.832} & \multicolumn{1}{l|}{0.763} & \multicolumn{1}{l|}{0.804} & \multicolumn{1}{l|}{0.785} & \multicolumn{1}{l|}{0.794} & \multicolumn{1}{l|}{0.661} & \multicolumn{1}{l|}{0.767} & \multicolumn{1}{l|}{0.714} \\ \hline
node2vec & 0.928 & 0.915 & 0.921 & 0.812 & 0.811 & 0.812 & 0.812 & 0.850 & 0.870 & 0.860 & 0.881 & 0.983 & \cellcolor[HTML]{C0C0C0}0.932 \\ \hline
deepwalk & 0.867 & 0.857 & 0.862 & 0.822 & 0.813 & 0.815 & \cellcolor[HTML]{C0C0C0}0.817 & 0.814 & 0.782 & 0.798 & 0.740 & 0.954 & 0.847 \\ \hline
LINE & 0.869 & 0.855 & 0.862 & 0.717 & 0.726 & 0.726 & 0.737 & 0.821 & 0.819 & 0.820 & 0.733 & 0.917 & 0.825 \\ \hline
HyperMap & 0.828 & 0.832 & 0.829 & 0.807 & 0.805 & 0.806 & 0.806 & 0.866 & 0.863 & 0.865 & 0.793 & 0.795 & 0.794 \\ \hline
CHM & 0.676 & 0.684 & 0.680 & 0.548 & 0.545 & 0.547 & 0.547 & 0.636 & 0.635 & 0.635 & 0.703 & 0.704 & 0.703 \\ \hline
SpringEmb & 0.548 & 0.549 & 0.548 & 0.606 & 0.608 & 0.606 & 0.607 & 0.726 & 0.726 & 0.726 & 0.685 & 0.684 & 0.684 \\ \hline
Poincare & 0.599 & 0.600 & 0.599 & 0.627 & 0.627 & 0.625 & 0.626 & 0.647 & 0.645 & 0.646 & 0.584 & 0.585 & 0.584 \\ \hline
Coalescent & 0.809 & 0.796 & 0.803 & 0.594 & 0.588 & 0.602 & 0.594 & 0.786 & 0.783 & 0.785 & 0.807 & 0.816 & 0.812 \\ \hline
HME & \multicolumn{3}{c|}{\cellcolor[HTML]{C0C0C0}0.931} & \multicolumn{4}{c|}{0.816} & \multicolumn{3}{c|}{\cellcolor[HTML]{C0C0C0}0.871} & \multicolumn{3}{c|}{0.925} \\ \hline
\end{tabular}
\caption{Link prediction result.}
\label{real-LP-AUC}
\end{table*}

\section{Related Work}
In this section, We review and differentiate the related work, including: network embedding, hyperbolic embedding and community detection.

\subsection{Network Embedding}
Network embedding aims to map each node in network to a continuous and distributed coordinate in a latent space. The embedding coordinates could reduce the sparsity and noise representation of conventional adjacency matrix. Various previous researches have shown its effectiveness and efficiency in several network tasks, including: node classification, link prediction and community detection\cite{cui2018survey}.

Inspired by word2vec\cite{mikolov2013efficient}, random walk based approaches are  proposed. Deepwalk\cite{perozzi2014deepwalk} uses node sequence generated by random walk as a equivalent of sentence. Node2vec\cite{grover2016node2vec} proposes more flexible explore strategies for random walkers. Consequently, more diverse neighbor sequence can be generated.

Matrix factorization is another popular and effective technique employed in network embedding. The representative work in this category is Grarep\cite{cao2015grarep}, which captures the different step relationship by performing matrix factorization on different transition matrices.

Another line of work computes the node embedding by preserving proximity between nodes up to different orders. LINE\cite{tang2015line} preserves the first-order and second-order proximity by approximating the empirical distribution of edge and neighbor node with sigmoid and softmax distributions. AROPE\cite{zhang2018arbitrary} proposes a arbitrary order proximity preserved embedding method based on SVD framework.

Besides the local structure information. Statistical property preserving network embedding has also attracted lots of attentions. \cite{wang2017community} preserves both the mesoscopic community structure of network, one of the most prominent property of network, and the microscopic structure features. \cite{feng2018representation} computes the node embedding while preserving the ubiquitous macroscopic scale-free property. \cite{lai2017prune} learns node embeddings from not only community-aware proximity but also global node ranking.

However, all above embedding methods do not consider the multiplexity of network.
Due to the complementary while redundant information of multiplex network, simple concatenation or summation of embedding is not feasible. Researchers propose severa methods for multiplex network. \cite{zhang2018scalable} learns each node a shared common embedding and a additional embedding for each type of relation. \cite{qu2017attention} employs the attention mechanism to infer the robust node representations across multiple view. In this work, we propose a multiplex network embedding method that maps each node into a latent hyperbolic space while preserving the multiplex community structure.

\subsection{Hyperbolic Embedding}
The seminal work that assumes hyperbolic geometry underlies the network is \cite{krioukov2010hyperbolic}. However, the plausibility that network nodes exist in hidden metric spaces is proved in \cite{serrano2008self}. With the underlying hyperbolic space assumption, scale free and strong clustering emerge naturally as the negative curvature geometry properties. Since then, the first and second order proximity based hyperbolic embedding approaches are proposed\cite{papadopoulos2015network}\cite{papadopoulos2015networkcommon}. The latent hyperbolic coordinates employed on AS (Autonomous System) Internet topology reveal interesting and promising result. The ASs locates at smaller angular distances tend to be geographically closer. \cite{blasius2018efficient} improves the embedding efficiency to quasilinear runtime. \cite{alanis2016efficient} employs the laplacian to infer the hyperbolic coordinates instead of the conventional proximity preservation. This series of work embed each node with the Poincar\'e disk model. \cite{nickel2017poincare} and \cite{balavzevic2019multi} propose to embed hierarchical data into Poincar\'e ball model instead, as it is well-suited for gradient-based optimization. \cite{wang2016hyperbolic} and \cite{faqeeh2018characterizing} find that the latent hyperbolic coordinates of nodes within one community tend to be closer. Then, this finding is employed to initialize the nodes embedding which could greatly improve the algorithm\cite{wang2016hyperbolic}. However, most of previous work focus on the single layer network embedding.

Although several researches reveal the hidden correlations between the hyperbolic coordinates of multiplex network\cite{kleineberg2016hidden}\cite{kleineberg2017geometric}. None of previous hyperbolic embedding work has employed these findings. We in this work extend the findings in \cite{wang2016hyperbolic} and \cite{faqeeh2018characterizing} to multiplex community.

\subsection{Community Detection}
Community is a group of vertices with prominent higher density within group connections than between group connections. Community detection is the one of the most concerned issues in network analysis. As for single layer networks, modularity maximization based work\cite{newman2006modularity} is one of the most representative methods. \cite{zhang2017modularity}\cite{wilson2017community} generalizes the multilayer modularity to adapt to networks with multiple layers. SBM (Stochastic Block Mode) based generative model is another important community detection method. \cite{rosvall2008maps} is a very different approach that detect community structure by encoding the flows of random walker. This work is generalized to multiplex scenario by relax the random walker jump freely between layers\cite{de2015identifying}. Recent work on community detection enhanced network embedding\cite{cavallari2017learning}\cite{wang2017community}\cite{lai2017prune}\cite{tu2018unified} are close to this work. However, none of them considers the multiple relations between nodes.

\section{Conclusion}

In this work, we have studied the problem of multiplex network embedding while preserving the multiplex community structure. Each node in each layer is mapped into a separate hyperbolic space. We jointly perform the multiplex community detection and hyperbolic network embedding by enforcing the random walker traverse in the latent hyperbolic spaces. We evaluate our method on several network tasks with various real world multiplex network datasets. The experiment results demonstrate that our method indeed improve both these two tasks simultaneously.

\bibliographystyle{ACM-Reference-Format}
\bibliography{ref}


\begin{thebibliography}{56}


\ifx \showCODEN    \undefined \def \showCODEN     #1{\unskip}     \fi
\ifx \showDOI      \undefined \def \showDOI       #1{#1}\fi
\ifx \showISBNx    \undefined \def \showISBNx     #1{\unskip}     \fi
\ifx \showISBNxiii \undefined \def \showISBNxiii  #1{\unskip}     \fi
\ifx \showISSN     \undefined \def \showISSN      #1{\unskip}     \fi
\ifx \showLCCN     \undefined \def \showLCCN      #1{\unskip}     \fi
\ifx \shownote     \undefined \def \shownote      #1{#1}          \fi
\ifx \showarticletitle \undefined \def \showarticletitle #1{#1}   \fi
\ifx \showURL      \undefined \def \showURL       {\relax}        \fi
\providecommand\bibfield[2]{#2}
\providecommand\bibinfo[2]{#2}
\providecommand\natexlab[1]{#1}
\providecommand\showeprint[2][]{arXiv:#2}

\bibitem[\protect\citeauthoryear{Alanis-Lobato, Mier, and
  Andrade-Navarro}{Alanis-Lobato et~al\mbox{.}}{2016}]%
        {alanis2016efficient}
\bibfield{author}{\bibinfo{person}{Gregorio Alanis-Lobato},
  \bibinfo{person}{Pablo Mier}, {and} \bibinfo{person}{Miguel~A
  Andrade-Navarro}.} \bibinfo{year}{2016}\natexlab{}.
\newblock \showarticletitle{Efficient embedding of complex networks to
  hyperbolic space via their Laplacian}.
\newblock \bibinfo{journal}{\emph{Scientific reports}}  \bibinfo{volume}{6}
  (\bibinfo{year}{2016}), \bibinfo{pages}{30108}.
\newblock


\bibitem[\protect\citeauthoryear{Bala{\v{z}}evi{\'c}, Allen, and
  Hospedales}{Bala{\v{z}}evi{\'c} et~al\mbox{.}}{2019}]%
        {balavzevic2019multi}
\bibfield{author}{\bibinfo{person}{Ivana Bala{\v{z}}evi{\'c}},
  \bibinfo{person}{Carl Allen}, {and} \bibinfo{person}{Timothy Hospedales}.}
  \bibinfo{year}{2019}\natexlab{}.
\newblock \showarticletitle{Multi-relational Poincar$\backslash$'e Graph
  Embeddings}.
\newblock \bibinfo{journal}{\emph{arXiv preprint arXiv:1905.09791}}
  (\bibinfo{year}{2019}).
\newblock


\bibitem[\protect\citeauthoryear{Bl{\"a}sius, Friedrich, Krohmer, and
  Laue}{Bl{\"a}sius et~al\mbox{.}}{2018}]%
        {blasius2018efficient}
\bibfield{author}{\bibinfo{person}{Thomas Bl{\"a}sius}, \bibinfo{person}{Tobias
  Friedrich}, \bibinfo{person}{Anton Krohmer}, {and} \bibinfo{person}{S{\"o}ren
  Laue}.} \bibinfo{year}{2018}\natexlab{}.
\newblock \showarticletitle{Efficient embedding of scale-free graphs in the
  hyperbolic plane}.
\newblock \bibinfo{journal}{\emph{IEEE/ACM Transactions on Networking}}
  \bibinfo{volume}{26}, \bibinfo{number}{2} (\bibinfo{year}{2018}),
  \bibinfo{pages}{920--933}.
\newblock


\bibitem[\protect\citeauthoryear{Boccaletti, Bianconi, Criado, Del~Genio,
  G{\'o}mez-Gardenes, Romance, Sendina-Nadal, Wang, and Zanin}{Boccaletti
  et~al\mbox{.}}{2014}]%
        {boccaletti2014structure}
\bibfield{author}{\bibinfo{person}{Stefano Boccaletti},
  \bibinfo{person}{Ginestra Bianconi}, \bibinfo{person}{Regino Criado},
  \bibinfo{person}{Charo~I Del~Genio}, \bibinfo{person}{Jes{\'u}s
  G{\'o}mez-Gardenes}, \bibinfo{person}{Miguel Romance}, \bibinfo{person}{Irene
  Sendina-Nadal}, \bibinfo{person}{Zhen Wang}, {and}
  \bibinfo{person}{Massimiliano Zanin}.} \bibinfo{year}{2014}\natexlab{}.
\newblock \showarticletitle{The structure and dynamics of multilayer networks}.
\newblock \bibinfo{journal}{\emph{Physics Reports}} \bibinfo{volume}{544},
  \bibinfo{number}{1} (\bibinfo{year}{2014}), \bibinfo{pages}{1--122}.
\newblock


\bibitem[\protect\citeauthoryear{{Bonnabel}}{{Bonnabel}}{2013}]%
        {bonnabel2013stochastic}
\bibfield{author}{\bibinfo{person}{Silvere {Bonnabel}}.}
  \bibinfo{year}{2013}\natexlab{}.
\newblock \showarticletitle{Stochastic Gradient Descent on Riemannian
  Manifolds}.
\newblock \bibinfo{journal}{\emph{IEEE Trans. Automat. Control}}
  \bibinfo{volume}{58}, \bibinfo{number}{9} (\bibinfo{year}{2013}),
  \bibinfo{pages}{2217--2229}.
\newblock


\bibitem[\protect\citeauthoryear{Cao, Lu, and Xu}{Cao et~al\mbox{.}}{2015}]%
        {cao2015grarep}
\bibfield{author}{\bibinfo{person}{Shaosheng Cao}, \bibinfo{person}{Wei Lu},
  {and} \bibinfo{person}{Qiongkai Xu}.} \bibinfo{year}{2015}\natexlab{}.
\newblock \showarticletitle{Grarep: Learning graph representations with global
  structural information}. In \bibinfo{booktitle}{\emph{Proceedings of the 24th
  ACM international on conference on information and knowledge management}}.
  ACM, \bibinfo{pages}{891--900}.
\newblock


\bibitem[\protect\citeauthoryear{Cavallari, Zheng, Cai, Chang, and
  Cambria}{Cavallari et~al\mbox{.}}{2017}]%
        {cavallari2017learning}
\bibfield{author}{\bibinfo{person}{Sandro Cavallari},
  \bibinfo{person}{Vincent~W Zheng}, \bibinfo{person}{Hongyun Cai},
  \bibinfo{person}{Kevin Chen-Chuan Chang}, {and} \bibinfo{person}{Erik
  Cambria}.} \bibinfo{year}{2017}\natexlab{}.
\newblock \showarticletitle{Learning community embedding with community
  detection and node embedding on graphs}. In
  \bibinfo{booktitle}{\emph{Proceedings of the 2017 ACM on Conference on
  Information and Knowledge Management}}. ACM, \bibinfo{pages}{377--386}.
\newblock


\bibitem[\protect\citeauthoryear{Cen, Zou, Zhang, Yang, Zhou, and Tang}{Cen
  et~al\mbox{.}}{2019}]%
        {cen2019representation}
\bibfield{author}{\bibinfo{person}{Yukuo Cen}, \bibinfo{person}{Xu Zou},
  \bibinfo{person}{Jianwei Zhang}, \bibinfo{person}{Hongxia Yang},
  \bibinfo{person}{Jingren Zhou}, {and} \bibinfo{person}{Jie Tang}.}
  \bibinfo{year}{2019}\natexlab{}.
\newblock \showarticletitle{Representation Learning for Attributed Multiplex
  Heterogeneous Network}.
\newblock \bibinfo{journal}{\emph{arXiv preprint arXiv:1905.01669}}
  (\bibinfo{year}{2019}).
\newblock


\bibitem[\protect\citeauthoryear{Cui, Wang, Pei, and Zhu}{Cui
  et~al\mbox{.}}{2018}]%
        {cui2018survey}
\bibfield{author}{\bibinfo{person}{Peng Cui}, \bibinfo{person}{Xiao Wang},
  \bibinfo{person}{Jian Pei}, {and} \bibinfo{person}{Wenwu Zhu}.}
  \bibinfo{year}{2018}\natexlab{}.
\newblock \showarticletitle{A survey on network embedding}.
\newblock \bibinfo{journal}{\emph{IEEE Transactions on Knowledge and Data
  Engineering}} \bibinfo{volume}{31}, \bibinfo{number}{5}
  (\bibinfo{year}{2018}), \bibinfo{pages}{833--852}.
\newblock


\bibitem[\protect\citeauthoryear{De~Domenico and Biamonte}{De~Domenico and
  Biamonte}{2016}]%
        {de2016spectral}
\bibfield{author}{\bibinfo{person}{Manlio De~Domenico} {and}
  \bibinfo{person}{Jacob Biamonte}.} \bibinfo{year}{2016}\natexlab{}.
\newblock \showarticletitle{Spectral entropies as information-theoretic tools
  for complex network comparison}.
\newblock \bibinfo{journal}{\emph{Physical Review X}} \bibinfo{volume}{6},
  \bibinfo{number}{4} (\bibinfo{year}{2016}), \bibinfo{pages}{041062}.
\newblock


\bibitem[\protect\citeauthoryear{De~Domenico, Lancichinetti, Arenas, and
  Rosvall}{De~Domenico et~al\mbox{.}}{2015a}]%
        {de2015identifying}
\bibfield{author}{\bibinfo{person}{Manlio De~Domenico}, \bibinfo{person}{Andrea
  Lancichinetti}, \bibinfo{person}{Alex Arenas}, {and} \bibinfo{person}{Martin
  Rosvall}.} \bibinfo{year}{2015}\natexlab{a}.
\newblock \showarticletitle{Identifying modular flows on multilayer networks
  reveals highly overlapping organization in interconnected systems}.
\newblock \bibinfo{journal}{\emph{Physical Review X}} \bibinfo{volume}{5},
  \bibinfo{number}{1} (\bibinfo{year}{2015}), \bibinfo{pages}{011027}.
\newblock


\bibitem[\protect\citeauthoryear{De~Domenico, Nicosia, Arenas, and
  Latora}{De~Domenico et~al\mbox{.}}{2015b}]%
        {de2015structural}
\bibfield{author}{\bibinfo{person}{Manlio De~Domenico},
  \bibinfo{person}{Vincenzo Nicosia}, \bibinfo{person}{Alexandre Arenas}, {and}
  \bibinfo{person}{Vito Latora}.} \bibinfo{year}{2015}\natexlab{b}.
\newblock \showarticletitle{Structural reducibility of multilayer networks}.
\newblock \bibinfo{journal}{\emph{Nature communications}}  \bibinfo{volume}{6}
  (\bibinfo{year}{2015}), \bibinfo{pages}{6864}.
\newblock


\bibitem[\protect\citeauthoryear{Faqeeh, Osat, and Radicchi}{Faqeeh
  et~al\mbox{.}}{2018}]%
        {faqeeh2018characterizing}
\bibfield{author}{\bibinfo{person}{Ali Faqeeh}, \bibinfo{person}{Saeed Osat},
  {and} \bibinfo{person}{Filippo Radicchi}.} \bibinfo{year}{2018}\natexlab{}.
\newblock \showarticletitle{Characterizing the analogy between hyperbolic
  embedding and community structure of complex networks}.
\newblock \bibinfo{journal}{\emph{Physical review letters}}
  \bibinfo{volume}{121}, \bibinfo{number}{9} (\bibinfo{year}{2018}),
  \bibinfo{pages}{098301}.
\newblock


\bibitem[\protect\citeauthoryear{Feng, Yang, Hu, Wu, and Zhang}{Feng
  et~al\mbox{.}}{2018}]%
        {feng2018representation}
\bibfield{author}{\bibinfo{person}{Rui Feng}, \bibinfo{person}{Yang Yang},
  \bibinfo{person}{Wenjie Hu}, \bibinfo{person}{Fei Wu}, {and}
  \bibinfo{person}{Yueting Zhang}.} \bibinfo{year}{2018}\natexlab{}.
\newblock \showarticletitle{Representation learning for scale-free networks}.
  In \bibinfo{booktitle}{\emph{Thirty-Second AAAI Conference on Artificial
  Intelligence}}.
\newblock


\bibitem[\protect\citeauthoryear{Fortunato and Barthelemy}{Fortunato and
  Barthelemy}{2007}]%
        {fortunato2007resolution}
\bibfield{author}{\bibinfo{person}{Santo Fortunato} {and} \bibinfo{person}{Marc
  Barthelemy}.} \bibinfo{year}{2007}\natexlab{}.
\newblock \showarticletitle{Resolution limit in community detection}.
\newblock \bibinfo{journal}{\emph{Proceedings of the national academy of
  sciences}} \bibinfo{volume}{104}, \bibinfo{number}{1} (\bibinfo{year}{2007}),
  \bibinfo{pages}{36--41}.
\newblock


\bibitem[\protect\citeauthoryear{{Friedrich} and {Krohmer}}{{Friedrich} and
  {Krohmer}}{2018}]%
        {friedrich2018on}
\bibfield{author}{\bibinfo{person}{Tobias {Friedrich}} {and}
  \bibinfo{person}{Anton {Krohmer}}.} \bibinfo{year}{2018}\natexlab{}.
\newblock \showarticletitle{On the Diameter of Hyperbolic Random Graphs}.
\newblock \bibinfo{journal}{\emph{SIAM Journal on Discrete Mathematics}}
  \bibinfo{volume}{32}, \bibinfo{number}{2} (\bibinfo{year}{2018}),
  \bibinfo{pages}{1314--1334}.
\newblock


\bibitem[\protect\citeauthoryear{Grover and Leskovec}{Grover and
  Leskovec}{2016}]%
        {grover2016node2vec}
\bibfield{author}{\bibinfo{person}{Aditya Grover} {and} \bibinfo{person}{Jure
  Leskovec}.} \bibinfo{year}{2016}\natexlab{}.
\newblock \showarticletitle{node2vec: Scalable feature learning for networks}.
  In \bibinfo{booktitle}{\emph{Proceedings of the 22nd ACM SIGKDD international
  conference on Knowledge discovery and data mining}}. ACM,
  \bibinfo{pages}{855--864}.
\newblock


\bibitem[\protect\citeauthoryear{{Gugelmann}, {Panagiotou}, and
  {Peter}}{{Gugelmann} et~al\mbox{.}}{2012}]%
        {gugelmann2012random}
\bibfield{author}{\bibinfo{person}{Luca {Gugelmann}},
  \bibinfo{person}{Konstantinos {Panagiotou}}, {and} \bibinfo{person}{Ueli
  {Peter}}.} \bibinfo{year}{2012}\natexlab{}.
\newblock \showarticletitle{Random hyperbolic graphs: degree sequence and
  clustering}. In \bibinfo{booktitle}{\emph{ICALP'12 Proceedings of the 39th
  international colloquium conference on Automata, Languages, and Programming -
  Volume Part II}}. \bibinfo{pages}{573--585}.
\newblock


\bibitem[\protect\citeauthoryear{Han, Xu, and Airoldi}{Han
  et~al\mbox{.}}{2015}]%
        {han2015consistent}
\bibfield{author}{\bibinfo{person}{Qiuyi Han}, \bibinfo{person}{Kevin~S Xu},
  {and} \bibinfo{person}{Edoardo~M Airoldi}.} \bibinfo{year}{2015}\natexlab{}.
\newblock \showarticletitle{Consistent estimation of dynamic and multi-layer
  block models}.
\newblock  (\bibinfo{year}{2015}), \bibinfo{pages}{1511--1520}.
\newblock


\bibitem[\protect\citeauthoryear{{Kawamoto} and {Rosvall}}{{Kawamoto} and
  {Rosvall}}{2015}]%
        {kawamoto2015estimating}
\bibfield{author}{\bibinfo{person}{Tatsuro {Kawamoto}} {and}
  \bibinfo{person}{Martin {Rosvall}}.} \bibinfo{year}{2015}\natexlab{}.
\newblock \showarticletitle{Estimating the resolution limit of the map equation
  in community detection}.
\newblock \bibinfo{journal}{\emph{Physical Review E}} \bibinfo{volume}{91},
  \bibinfo{number}{1} (\bibinfo{year}{2015}), \bibinfo{pages}{12809}.
\newblock


\bibitem[\protect\citeauthoryear{Kivel{\"a}, Arenas, Barthelemy, Gleeson,
  Moreno, and Porter}{Kivel{\"a} et~al\mbox{.}}{2014}]%
        {kivela2014multilayer}
\bibfield{author}{\bibinfo{person}{Mikko Kivel{\"a}}, \bibinfo{person}{Alex
  Arenas}, \bibinfo{person}{Marc Barthelemy}, \bibinfo{person}{James~P
  Gleeson}, \bibinfo{person}{Yamir Moreno}, {and} \bibinfo{person}{Mason~A
  Porter}.} \bibinfo{year}{2014}\natexlab{}.
\newblock \showarticletitle{Multilayer networks}.
\newblock \bibinfo{journal}{\emph{Journal of complex networks}}
  \bibinfo{volume}{2}, \bibinfo{number}{3} (\bibinfo{year}{2014}),
  \bibinfo{pages}{203--271}.
\newblock


\bibitem[\protect\citeauthoryear{Kleineberg, Bogun{\'a}, Serrano, and
  Papadopoulos}{Kleineberg et~al\mbox{.}}{2016}]%
        {kleineberg2016hidden}
\bibfield{author}{\bibinfo{person}{Kaj-Kolja Kleineberg},
  \bibinfo{person}{Mari{\'a}n Bogun{\'a}}, \bibinfo{person}{M~{\'A}ngeles
  Serrano}, {and} \bibinfo{person}{Fragkiskos Papadopoulos}.}
  \bibinfo{year}{2016}\natexlab{}.
\newblock \showarticletitle{Hidden geometric correlations in real multiplex
  networks}.
\newblock \bibinfo{journal}{\emph{Nature Physics}} \bibinfo{volume}{12},
  \bibinfo{number}{11} (\bibinfo{year}{2016}), \bibinfo{pages}{1076}.
\newblock


\bibitem[\protect\citeauthoryear{Kleineberg, Buzna, Papadopoulos,
  Bogu{\~n}{\'a}, and Serrano}{Kleineberg et~al\mbox{.}}{2017}]%
        {kleineberg2017geometric}
\bibfield{author}{\bibinfo{person}{Kaj-Kolja Kleineberg},
  \bibinfo{person}{Lubos Buzna}, \bibinfo{person}{Fragkiskos Papadopoulos},
  \bibinfo{person}{Mari{\'a}n Bogu{\~n}{\'a}}, {and}
  \bibinfo{person}{M~{\'A}ngeles Serrano}.} \bibinfo{year}{2017}\natexlab{}.
\newblock \showarticletitle{Geometric correlations mitigate the extreme
  vulnerability of multiplex networks against targeted attacks}.
\newblock \bibinfo{journal}{\emph{Physical review letters}}
  \bibinfo{volume}{118}, \bibinfo{number}{21} (\bibinfo{year}{2017}),
  \bibinfo{pages}{218301}.
\newblock


\bibitem[\protect\citeauthoryear{Krioukov, Papadopoulos, Kitsak, Vahdat, and
  Bogun{\'a}}{Krioukov et~al\mbox{.}}{2010}]%
        {krioukov2010hyperbolic}
\bibfield{author}{\bibinfo{person}{Dmitri Krioukov},
  \bibinfo{person}{Fragkiskos Papadopoulos}, \bibinfo{person}{Maksim Kitsak},
  \bibinfo{person}{Amin Vahdat}, {and} \bibinfo{person}{Mari{\'a}n
  Bogun{\'a}}.} \bibinfo{year}{2010}\natexlab{}.
\newblock \showarticletitle{Hyperbolic geometry of complex networks}.
\newblock \bibinfo{journal}{\emph{Physical Review E}} \bibinfo{volume}{82},
  \bibinfo{number}{3} (\bibinfo{year}{2010}), \bibinfo{pages}{036106}.
\newblock


\bibitem[\protect\citeauthoryear{Kuncheva and Montana}{Kuncheva and
  Montana}{2015}]%
        {kuncheva2015community}
\bibfield{author}{\bibinfo{person}{Zhana Kuncheva} {and}
  \bibinfo{person}{Giovanni Montana}.} \bibinfo{year}{2015}\natexlab{}.
\newblock \showarticletitle{Community detection in multiplex networks using
  locally adaptive random walks}. In \bibinfo{booktitle}{\emph{Proceedings of
  the 2015 IEEE/ACM International Conference on Advances in Social Networks
  Analysis and Mining 2015}}. ACM, \bibinfo{pages}{1308--1315}.
\newblock


\bibitem[\protect\citeauthoryear{Lai, Hsu, Chen, Yeh, and Lin}{Lai
  et~al\mbox{.}}{2017}]%
        {lai2017prune}
\bibfield{author}{\bibinfo{person}{Yi-An Lai}, \bibinfo{person}{Chin-Chi Hsu},
  \bibinfo{person}{Wen~Hao Chen}, \bibinfo{person}{Mi-Yen Yeh}, {and}
  \bibinfo{person}{Shou-De Lin}.} \bibinfo{year}{2017}\natexlab{}.
\newblock \showarticletitle{Prune: Preserving proximity and global ranking for
  network embedding}. In \bibinfo{booktitle}{\emph{Advances in neural
  information processing systems}}. \bibinfo{pages}{5257--5266}.
\newblock


\bibitem[\protect\citeauthoryear{Liu, Chen, Yeung, Suzumura, and Chen}{Liu
  et~al\mbox{.}}{2017}]%
        {liu2017principled}
\bibfield{author}{\bibinfo{person}{Weiyi Liu}, \bibinfo{person}{Pin-Yu Chen},
  \bibinfo{person}{Sailung Yeung}, \bibinfo{person}{Toyotaro Suzumura}, {and}
  \bibinfo{person}{Lingli Chen}.} \bibinfo{year}{2017}\natexlab{}.
\newblock \showarticletitle{Principled multilayer network embedding}. In
  \bibinfo{booktitle}{\emph{2017 IEEE International Conference on Data Mining
  Workshops (ICDMW)}}. IEEE, \bibinfo{pages}{134--141}.
\newblock


\bibitem[\protect\citeauthoryear{Matsuno and Murata}{Matsuno and
  Murata}{2018}]%
        {matsuno2018mell}
\bibfield{author}{\bibinfo{person}{Ryuta Matsuno} {and}
  \bibinfo{person}{Tsuyoshi Murata}.} \bibinfo{year}{2018}\natexlab{}.
\newblock \showarticletitle{MELL: effective embedding method for multiplex
  networks}. In \bibinfo{booktitle}{\emph{Companion Proceedings of the The Web
  Conference 2018}}. International World Wide Web Conferences Steering
  Committee, \bibinfo{pages}{1261--1268}.
\newblock


\bibitem[\protect\citeauthoryear{Mikolov, Chen, Corrado, and Dean}{Mikolov
  et~al\mbox{.}}{2013}]%
        {mikolov2013efficient}
\bibfield{author}{\bibinfo{person}{Tomas Mikolov}, \bibinfo{person}{Kai Chen},
  \bibinfo{person}{Greg Corrado}, {and} \bibinfo{person}{Jeffrey Dean}.}
  \bibinfo{year}{2013}\natexlab{}.
\newblock \showarticletitle{Efficient estimation of word representations in
  vector space}.
\newblock \bibinfo{journal}{\emph{arXiv preprint arXiv:1301.3781}}
  (\bibinfo{year}{2013}).
\newblock


\bibitem[\protect\citeauthoryear{{Mikolov}, {Sutskever}, {Chen}, {Corrado}, and
  {Dean}}{{Mikolov} et~al\mbox{.}}{2013}]%
        {mikolov2013distributed}
\bibfield{author}{\bibinfo{person}{Tomas {Mikolov}}, \bibinfo{person}{Ilya
  {Sutskever}}, \bibinfo{person}{Kai {Chen}}, \bibinfo{person}{Greg~S
  {Corrado}}, {and} \bibinfo{person}{Jeff {Dean}}.}
  \bibinfo{year}{2013}\natexlab{}.
\newblock \showarticletitle{Distributed Representations of Words and Phrases
  and their Compositionality}. In \bibinfo{booktitle}{\emph{Advances in Neural
  Information Processing Systems 26}}. \bibinfo{pages}{3111--3119}.
\newblock


\bibitem[\protect\citeauthoryear{{Muscoloni}, {Thomas}, {Ciucci}, {Bianconi},
  and {Cannistraci}}{{Muscoloni} et~al\mbox{.}}{2017}]%
        {muscoloni2017machine}
\bibfield{author}{\bibinfo{person}{Alessandro {Muscoloni}},
  \bibinfo{person}{Josephine~Maria {Thomas}}, \bibinfo{person}{Sara {Ciucci}},
  \bibinfo{person}{Ginestra {Bianconi}}, {and} \bibinfo{person}{Carlo~Vittorio
  {Cannistraci}}.} \bibinfo{year}{2017}\natexlab{}.
\newblock \showarticletitle{Machine learning meets complex networks via
  coalescent embedding in the hyperbolic space}.
\newblock \bibinfo{journal}{\emph{Nature Communications}} \bibinfo{volume}{8},
  \bibinfo{number}{1} (\bibinfo{year}{2017}), \bibinfo{pages}{1615--1615}.
\newblock


\bibitem[\protect\citeauthoryear{Newman}{Newman}{2006}]%
        {newman2006modularity}
\bibfield{author}{\bibinfo{person}{Mark~EJ Newman}.}
  \bibinfo{year}{2006}\natexlab{}.
\newblock \showarticletitle{Modularity and community structure in networks}.
\newblock \bibinfo{journal}{\emph{Proceedings of the national academy of
  sciences}} \bibinfo{volume}{103}, \bibinfo{number}{23}
  (\bibinfo{year}{2006}), \bibinfo{pages}{8577--8582}.
\newblock


\bibitem[\protect\citeauthoryear{Nickel and Kiela}{Nickel and Kiela}{2017}]%
        {nickel2017poincare}
\bibfield{author}{\bibinfo{person}{Maximillian Nickel} {and}
  \bibinfo{person}{Douwe Kiela}.} \bibinfo{year}{2017}\natexlab{}.
\newblock \showarticletitle{Poincar{\'e} embeddings for learning hierarchical
  representations}. In \bibinfo{booktitle}{\emph{Advances in neural information
  processing systems}}. \bibinfo{pages}{6338--6347}.
\newblock


\bibitem[\protect\citeauthoryear{Papadopoulos, Aldecoa, and
  Krioukov}{Papadopoulos et~al\mbox{.}}{2015a}]%
        {papadopoulos2015networkcommon}
\bibfield{author}{\bibinfo{person}{Fragkiskos Papadopoulos},
  \bibinfo{person}{Rodrigo Aldecoa}, {and} \bibinfo{person}{Dmitri Krioukov}.}
  \bibinfo{year}{2015}\natexlab{a}.
\newblock \showarticletitle{Network geometry inference using common neighbors}.
\newblock \bibinfo{journal}{\emph{Physical Review E}} \bibinfo{volume}{92},
  \bibinfo{number}{2} (\bibinfo{year}{2015}), \bibinfo{pages}{022807}.
\newblock


\bibitem[\protect\citeauthoryear{Papadopoulos, Kitsak, Serrano, Bogun{\'a}, and
  Krioukov}{Papadopoulos et~al\mbox{.}}{2012}]%
        {papadopoulos2012popularity}
\bibfield{author}{\bibinfo{person}{Fragkiskos Papadopoulos},
  \bibinfo{person}{Maksim Kitsak}, \bibinfo{person}{M~{\'A}ngeles Serrano},
  \bibinfo{person}{Mari{\'a}n Bogun{\'a}}, {and} \bibinfo{person}{Dmitri
  Krioukov}.} \bibinfo{year}{2012}\natexlab{}.
\newblock \showarticletitle{Popularity versus similarity in growing networks}.
\newblock \bibinfo{journal}{\emph{Nature}} \bibinfo{volume}{489},
  \bibinfo{number}{7417} (\bibinfo{year}{2012}), \bibinfo{pages}{537}.
\newblock


\bibitem[\protect\citeauthoryear{Papadopoulos, Psomas, and
  Krioukov}{Papadopoulos et~al\mbox{.}}{2015b}]%
        {papadopoulos2015network}
\bibfield{author}{\bibinfo{person}{Fragkiskos Papadopoulos},
  \bibinfo{person}{Constantinos Psomas}, {and} \bibinfo{person}{Dmitri
  Krioukov}.} \bibinfo{year}{2015}\natexlab{b}.
\newblock \showarticletitle{Network mapping by replaying hyperbolic growth}.
\newblock \bibinfo{journal}{\emph{IEEE/ACM Transactions on Networking (TON)}}
  \bibinfo{volume}{23}, \bibinfo{number}{1} (\bibinfo{year}{2015}),
  \bibinfo{pages}{198--211}.
\newblock


\bibitem[\protect\citeauthoryear{Peixoto}{Peixoto}{2013}]%
        {peixoto2013parsimonious}
\bibfield{author}{\bibinfo{person}{Tiago~P Peixoto}.}
  \bibinfo{year}{2013}\natexlab{}.
\newblock \showarticletitle{Parsimonious module inference in large networks}.
\newblock \bibinfo{journal}{\emph{Physical review letters}}
  \bibinfo{volume}{110}, \bibinfo{number}{14} (\bibinfo{year}{2013}),
  \bibinfo{pages}{148701}.
\newblock


\bibitem[\protect\citeauthoryear{{Peixoto}}{{Peixoto}}{2014}]%
        {peixoto2014hierarchical}
\bibfield{author}{\bibinfo{person}{Tiago~P. {Peixoto}}.}
  \bibinfo{year}{2014}\natexlab{}.
\newblock \showarticletitle{Hierarchical Block Structures and High-Resolution
  Model Selection in Large Networks}.
\newblock \bibinfo{journal}{\emph{Physical Review X}} \bibinfo{volume}{4},
  \bibinfo{number}{1} (\bibinfo{year}{2014}), \bibinfo{pages}{11047}.
\newblock


\bibitem[\protect\citeauthoryear{Perozzi, Al-Rfou, and Skiena}{Perozzi
  et~al\mbox{.}}{2014}]%
        {perozzi2014deepwalk}
\bibfield{author}{\bibinfo{person}{Bryan Perozzi}, \bibinfo{person}{Rami
  Al-Rfou}, {and} \bibinfo{person}{Steven Skiena}.}
  \bibinfo{year}{2014}\natexlab{}.
\newblock \showarticletitle{Deepwalk: Online learning of social
  representations}. In \bibinfo{booktitle}{\emph{Proceedings of the 20th ACM
  SIGKDD international conference on Knowledge discovery and data mining}}.
  ACM, \bibinfo{pages}{701--710}.
\newblock


\bibitem[\protect\citeauthoryear{Qu, Tang, Shang, Ren, Zhang, and Han}{Qu
  et~al\mbox{.}}{2017}]%
        {qu2017attention}
\bibfield{author}{\bibinfo{person}{Meng Qu}, \bibinfo{person}{Jian Tang},
  \bibinfo{person}{Jingbo Shang}, \bibinfo{person}{Xiang Ren},
  \bibinfo{person}{Ming Zhang}, {and} \bibinfo{person}{Jiawei Han}.}
  \bibinfo{year}{2017}\natexlab{}.
\newblock \showarticletitle{An attention-based collaboration framework for
  multi-view network representation learning}. In
  \bibinfo{booktitle}{\emph{Proceedings of the 2017 ACM on Conference on
  Information and Knowledge Management}}. ACM, \bibinfo{pages}{1767--1776}.
\newblock


\bibitem[\protect\citeauthoryear{Rosvall and Bergstrom}{Rosvall and
  Bergstrom}{2008}]%
        {rosvall2008maps}
\bibfield{author}{\bibinfo{person}{Martin Rosvall} {and}
  \bibinfo{person}{Carl~T Bergstrom}.} \bibinfo{year}{2008}\natexlab{}.
\newblock \showarticletitle{Maps of random walks on complex networks reveal
  community structure}.
\newblock \bibinfo{journal}{\emph{Proceedings of the National Academy of
  Sciences}} \bibinfo{volume}{105}, \bibinfo{number}{4} (\bibinfo{year}{2008}),
  \bibinfo{pages}{1118--1123}.
\newblock


\bibitem[\protect\citeauthoryear{Serrano, Krioukov, and Bogun{\'a}}{Serrano
  et~al\mbox{.}}{2008}]%
        {serrano2008self}
\bibfield{author}{\bibinfo{person}{M~Angeles Serrano}, \bibinfo{person}{Dmitri
  Krioukov}, {and} \bibinfo{person}{Mari{\'a}n Bogun{\'a}}.}
  \bibinfo{year}{2008}\natexlab{}.
\newblock \showarticletitle{Self-similarity of complex networks and hidden
  metric spaces}.
\newblock \bibinfo{journal}{\emph{Physical review letters}}
  \bibinfo{volume}{100}, \bibinfo{number}{7} (\bibinfo{year}{2008}),
  \bibinfo{pages}{078701}.
\newblock


\bibitem[\protect\citeauthoryear{Shi, Han, He, He, Yang, Luo, and Han}{Shi
  et~al\mbox{.}}{2018}]%
        {shi2018mvn2vec}
\bibfield{author}{\bibinfo{person}{Yu Shi}, \bibinfo{person}{Fangqiu Han},
  \bibinfo{person}{Xinwei He}, \bibinfo{person}{Xinran He},
  \bibinfo{person}{Carl Yang}, \bibinfo{person}{Jie Luo}, {and}
  \bibinfo{person}{Jiawei Han}.} \bibinfo{year}{2018}\natexlab{}.
\newblock \showarticletitle{mvn2vec: Preservation and collaboration in
  multi-view network embedding}.
\newblock \bibinfo{journal}{\emph{arXiv preprint arXiv:1801.06597}}
  (\bibinfo{year}{2018}).
\newblock


\bibitem[\protect\citeauthoryear{Stanley, Shai, Taylor, and Mucha}{Stanley
  et~al\mbox{.}}{2016}]%
        {stanley2016clustering}
\bibfield{author}{\bibinfo{person}{Natalie Stanley}, \bibinfo{person}{Saray
  Shai}, \bibinfo{person}{Dane Taylor}, {and} \bibinfo{person}{Peter~J Mucha}.}
  \bibinfo{year}{2016}\natexlab{}.
\newblock \showarticletitle{Clustering network layers with the strata
  multilayer stochastic block model}.
\newblock \bibinfo{journal}{\emph{IEEE transactions on network science and
  engineering}} \bibinfo{volume}{3}, \bibinfo{number}{2}
  (\bibinfo{year}{2016}), \bibinfo{pages}{95--105}.
\newblock


\bibitem[\protect\citeauthoryear{Tang, Qu, Wang, Zhang, Yan, and Mei}{Tang
  et~al\mbox{.}}{2015}]%
        {tang2015line}
\bibfield{author}{\bibinfo{person}{Jian Tang}, \bibinfo{person}{Meng Qu},
  \bibinfo{person}{Mingzhe Wang}, \bibinfo{person}{Ming Zhang},
  \bibinfo{person}{Jun Yan}, {and} \bibinfo{person}{Qiaozhu Mei}.}
  \bibinfo{year}{2015}\natexlab{}.
\newblock \showarticletitle{Line: Large-scale information network embedding}.
  In \bibinfo{booktitle}{\emph{Proceedings of the 24th international conference
  on world wide web}}. International World Wide Web Conferences Steering
  Committee, \bibinfo{pages}{1067--1077}.
\newblock


\bibitem[\protect\citeauthoryear{Taylor, Shai, Stanley, and Mucha}{Taylor
  et~al\mbox{.}}{2016}]%
        {taylor2016enhanced}
\bibfield{author}{\bibinfo{person}{Dane Taylor}, \bibinfo{person}{Saray Shai},
  \bibinfo{person}{Natalie Stanley}, {and} \bibinfo{person}{Peter~J Mucha}.}
  \bibinfo{year}{2016}\natexlab{}.
\newblock \showarticletitle{Enhanced detectability of community structure in
  multilayer networks through layer aggregation}.
\newblock \bibinfo{journal}{\emph{Physical review letters}}
  \bibinfo{volume}{116}, \bibinfo{number}{22} (\bibinfo{year}{2016}),
  \bibinfo{pages}{228301}.
\newblock


\bibitem[\protect\citeauthoryear{Tu, Zeng, Wang, Zhang, Liu, Sun, Zhang, and
  Lin}{Tu et~al\mbox{.}}{2018}]%
        {tu2018unified}
\bibfield{author}{\bibinfo{person}{Cunchao Tu}, \bibinfo{person}{Xiangkai
  Zeng}, \bibinfo{person}{Hao Wang}, \bibinfo{person}{Zhengyan Zhang},
  \bibinfo{person}{Zhiyuan Liu}, \bibinfo{person}{Maosong Sun},
  \bibinfo{person}{Bo Zhang}, {and} \bibinfo{person}{Leyu Lin}.}
  \bibinfo{year}{2018}\natexlab{}.
\newblock \showarticletitle{A unified framework for community detection and
  network representation learning}.
\newblock \bibinfo{journal}{\emph{IEEE Transactions on Knowledge and Data
  Engineering}} \bibinfo{volume}{31}, \bibinfo{number}{6}
  (\bibinfo{year}{2018}), \bibinfo{pages}{1051--1065}.
\newblock


\bibitem[\protect\citeauthoryear{Wang, Cui, Wang, Pei, Zhu, and Yang}{Wang
  et~al\mbox{.}}{2017}]%
        {wang2017community}
\bibfield{author}{\bibinfo{person}{Xiao Wang}, \bibinfo{person}{Peng Cui},
  \bibinfo{person}{Jing Wang}, \bibinfo{person}{Jian Pei},
  \bibinfo{person}{Wenwu Zhu}, {and} \bibinfo{person}{Shiqiang Yang}.}
  \bibinfo{year}{2017}\natexlab{}.
\newblock \showarticletitle{Community preserving network embedding}. In
  \bibinfo{booktitle}{\emph{Thirty-First AAAI Conference on Artificial
  Intelligence}}.
\newblock


\bibitem[\protect\citeauthoryear{Wang, Li, Jin, Xiong, and Wu}{Wang
  et~al\mbox{.}}{2016}]%
        {wang2016hyperbolic}
\bibfield{author}{\bibinfo{person}{Zuxi Wang}, \bibinfo{person}{Qingguang Li},
  \bibinfo{person}{Fengdong Jin}, \bibinfo{person}{Wei Xiong}, {and}
  \bibinfo{person}{Yao Wu}.} \bibinfo{year}{2016}\natexlab{}.
\newblock \showarticletitle{Hyperbolic mapping of complex networks based on
  community information}.
\newblock \bibinfo{journal}{\emph{Physica A: Statistical Mechanics and its
  Applications}}  \bibinfo{volume}{455} (\bibinfo{year}{2016}),
  \bibinfo{pages}{104--119}.
\newblock


\bibitem[\protect\citeauthoryear{{Wang}, {Li}, {Xiong}, {Jin}, and {Wu}}{{Wang}
  et~al\mbox{.}}{2016}]%
        {wang2016fast}
\bibfield{author}{\bibinfo{person}{Zuxi {Wang}}, \bibinfo{person}{Qingguang
  {Li}}, \bibinfo{person}{Wei {Xiong}}, \bibinfo{person}{Fengdong {Jin}}, {and}
  \bibinfo{person}{Yao {Wu}}.} \bibinfo{year}{2016}\natexlab{}.
\newblock \showarticletitle{Fast community detection based on sector edge
  aggregation metric model in hyperbolic space}.
\newblock \bibinfo{journal}{\emph{Physica A-statistical Mechanics and Its
  Applications}}  \bibinfo{volume}{452} (\bibinfo{year}{2016}),
  \bibinfo{pages}{178--191}.
\newblock


\bibitem[\protect\citeauthoryear{Wilson, Palowitch, Bhamidi, and Nobel}{Wilson
  et~al\mbox{.}}{2017}]%
        {wilson2017community}
\bibfield{author}{\bibinfo{person}{James~D Wilson}, \bibinfo{person}{John
  Palowitch}, \bibinfo{person}{Shankar Bhamidi}, {and}
  \bibinfo{person}{Andrew~B Nobel}.} \bibinfo{year}{2017}\natexlab{}.
\newblock \showarticletitle{Community extraction in multilayer networks with
  heterogeneous community structure}.
\newblock \bibinfo{journal}{\emph{The Journal of Machine Learning Research}}
  \bibinfo{volume}{18}, \bibinfo{number}{1} (\bibinfo{year}{2017}),
  \bibinfo{pages}{5458--5506}.
\newblock


\bibitem[\protect\citeauthoryear{{Yang}, {Wang}, {Bhuiyan}, and {Choo}}{{Yang}
  et~al\mbox{.}}{2017}]%
        {yang2017hypergraph}
\bibfield{author}{\bibinfo{person}{Wenyin {Yang}}, \bibinfo{person}{Guojun
  {Wang}}, \bibinfo{person}{Zakirul~Alam {Bhuiyan}}, {and}
  \bibinfo{person}{Kim-Kwang~Raymond {Choo}}.} \bibinfo{year}{2017}\natexlab{}.
\newblock \showarticletitle{Hypergraph partitioning for social networks based
  on information entropy modularity}.
\newblock \bibinfo{journal}{\emph{Journal of Network and Computer
  Applications}}  \bibinfo{volume}{86} (\bibinfo{year}{2017}),
  \bibinfo{pages}{59--71}.
\newblock


\bibitem[\protect\citeauthoryear{{Zhang}, {Wei}, {Liu}, {Tang}, and
  {Li}}{{Zhang} et~al\mbox{.}}{2017}]%
        {zhang2017graph}
\bibfield{author}{\bibinfo{person}{Chenzi {Zhang}}, \bibinfo{person}{Fan
  {Wei}}, \bibinfo{person}{Qin {Liu}}, \bibinfo{person}{Zhihao~Gavin {Tang}},
  {and} \bibinfo{person}{Zhenguo {Li}}.} \bibinfo{year}{2017}\natexlab{}.
\newblock \showarticletitle{Graph Edge Partitioning via Neighborhood
  Heuristic}. In \bibinfo{booktitle}{\emph{Proceedings of the 23rd ACM SIGKDD
  International Conference on Knowledge Discovery and Data Mining}}.
  \bibinfo{pages}{605--614}.
\newblock


\bibitem[\protect\citeauthoryear{Zhang, Qiu, Yi, and Song}{Zhang
  et~al\mbox{.}}{2018b}]%
        {zhang2018scalable}
\bibfield{author}{\bibinfo{person}{Hongming Zhang}, \bibinfo{person}{Liwei
  Qiu}, \bibinfo{person}{Lingling Yi}, {and} \bibinfo{person}{Yangqiu Song}.}
  \bibinfo{year}{2018}\natexlab{b}.
\newblock \showarticletitle{Scalable Multiplex Network Embedding.}. In
  \bibinfo{booktitle}{\emph{IJCAI}}, Vol.~\bibinfo{volume}{18}.
  \bibinfo{pages}{3082--3088}.
\newblock


\bibitem[\protect\citeauthoryear{Zhang, Wang, Lai, and Philip}{Zhang
  et~al\mbox{.}}{2017}]%
        {zhang2017modularity}
\bibfield{author}{\bibinfo{person}{Han Zhang}, \bibinfo{person}{Chang-Dong
  Wang}, \bibinfo{person}{Jian-Huang Lai}, {and} \bibinfo{person}{S~Yu
  Philip}.} \bibinfo{year}{2017}\natexlab{}.
\newblock \showarticletitle{Modularity in complex multilayer networks with
  multiple aspects: a static perspective}. In \bibinfo{booktitle}{\emph{Applied
  Informatics}}, Vol.~\bibinfo{volume}{4}. SpringerOpen, \bibinfo{pages}{7}.
\newblock


\bibitem[\protect\citeauthoryear{Zhang, Cui, Wang, Pei, Yao, and Zhu}{Zhang
  et~al\mbox{.}}{2018a}]%
        {zhang2018arbitrary}
\bibfield{author}{\bibinfo{person}{Ziwei Zhang}, \bibinfo{person}{Peng Cui},
  \bibinfo{person}{Xiao Wang}, \bibinfo{person}{Jian Pei},
  \bibinfo{person}{Xuanrong Yao}, {and} \bibinfo{person}{Wenwu Zhu}.}
  \bibinfo{year}{2018}\natexlab{a}.
\newblock \showarticletitle{Arbitrary-order proximity preserved network
  embedding}. In \bibinfo{booktitle}{\emph{Proceedings of the 24th ACM SIGKDD
  International Conference on Knowledge Discovery \& Data Mining}}. ACM,
  \bibinfo{pages}{2778--2786}.
\newblock


\end{thebibliography}


\end{document}